\documentclass{article}
\usepackage{amssymb, amsmath, amsthm}
\numberwithin{equation}{section}
\newtheorem{theorem}{Theorem}[section]

\newtheorem{lemma}{Lemma}[section]
\newtheorem{corollary}{Corollary}[section]
\newtheorem{remark}{Remark}[section]

\begin{document}
\title{Inhomogeneous relativistic Boltzmann equation near vacuum in the Robertson-Walker space-time}
\author{\hspace{-0cm}$^{1}$Etienne TAKOU; $^{2}$Fid\`ele L. CIAKE CIAKE\\
\small{\hspace{-3cm}$^{1}$D\'epartement de Math\'{e}matiques, Ecole Nationale Sup\'erieure Polytechnique},\\ \small{Universit\'e de Yaound\'e 1, BP 8390, Yaounde, Cameroun, e-mail: takoueti@yahoo.com} \\ \small{\hspace{-5.3cm}$^{2}$D\'epartement de Math\'{e}matiques,  Ecole Normale Sup\'erieure,}\\ \small{\hspace{-1cm} Universit\'e de Yaound\'e 1, BP 47 Yaound\'e, Cameroun, e-mail:			
flciake@gmail.com
 }}
\date{}
\maketitle
\begin{abstract}
In this paper, we consider the Cauchy problem for the relativistic Boltzmann equation with near vacuum initial data where the distribution function depends on the time, the position and the impulsion. The collision kernel considered here is for the hard potentials case and the background space-time in which the study is done is the Robertson-Walker space-time. Unique global (in time) mild solution is obtained in a suitable weighted space.

\end{abstract}
\textbf{Key words:} Relativistic Boltzmann equation, Robertson-Walker, inhomogeneous, mild solution.
\section{Introduction}
One of the most important equations in relativistic kinetic theory of gas is the Boltzmann equation. The main interest of the Boltzmann equation is the description of the one-particle distribution function associated to the gas. This function is physically interpreted as the probability of the presence density of a particle in a given volume. This equation describes the time evolution of the system where collisions between particles can no longer be neglected. One should consider effects of collisions by introducing in the right hand side of  the Vlasov equation a term of collision called collision operator. In this work, we assume that the particles interact only via binary and elastic collisions. This occurs when the mean free time is much shorter than the characteristic length time associated with the system. Therefore, between collisions for the case of uncharged particles considered in this paper, the particles move along future directed time-like geodesic of the space-time  ($ \mathbb{R}^{4}, ds^{2} $). From the tangent bundle point of view, the gas particles follow segments of integral curves of the vector field which will be specified later.

We consider as background in this paper the  Robertson-Walker(RW) space-time ($ \mathbb{R}^{4}, ds^{2} $) where  the metric tensor $ ds^{2} $ with signature (--, +, +, +) can be written as:
\begin{equation}\label{eq:1.1}
ds^{2} = -dt^{2} + R^{2}(t)[(dx^{1})^{2} + (dx^{2})^{2} + (dx^{3})^{2}]
\end{equation}
in (\ref{eq:1.1}), $ R(t) $  is given and is called the cosmological expansion factor. In fact, the RW metric is an exact solution of Einstein's field equations of General Relativity; it describes a homogeneous, isotropic expanding or contracting universe. The general form of the metric follows from the geometric properties of homogeneity and isotropy; Einstein's field equations are only needed to derive the scale factor of the universe as a function of time.

We consider in this work  particles with the same rest-mass $m$ that can be rescaled to $ m = 1$. The particles are then required to move on the future sheet of the mass-shell whose equation is $ -(p^{0})^{2} + R^{2}(t)[(p^{1})^{2} + (p^{2})^{2} + (p^{3})^{2}] = -1 $.

The main difficulty while studying the Boltzmann equation lays in the collision kernel. Glassey derived the collision kernel in \cite{glassey1} for the Newtonian case, but for the relativistic case for a more detail description of the scattering kernel, we refer to \cite{csernai}.

Choquet-Bruhat, Y.\cite{choquet} defined the $ \mu-N $ regularity of the collision operator, which appears in the r.h.s of the Boltzmann equation. Under the condition of $ \mu-N $ regularity, several authors studied and proved some existence theorems for the relativistic Boltzmann equation.  Takou, E., Noutchegueme, N. and Dongo, D.\cite{dongo, takou1, takou2} proved some global results in the homogeneous cases under assumptions close to $ \mu-N $ regularity. Bancel, D.\cite{bancel1} also proved one local result under the $ \mu-N $ regularity condition. Glassey, R.\cite{glassey} proved a global existence theorem with near vacuum initial data in the Minkowski space-time.

The use of $ \mu-N $ regularity doesn't allow a very good physical description of the collision operator. In fact, this operator depends on several terms including the collision kernel, the relative momentum and the energy in the center of momentum. Furthermore,
one of the main terms in the collision kernel is the scattering kernel which measures interactions between particles.

In the Newtonian Boltzmann equation, scattering kernels are usually classified  into soft and hard potentials. This classification was originally adapted in the relativistic case by Dudy\'nski, M. and Ekiel-Jer$\dot{z}$ewska, M.\cite{dud} and recently reformulated by Strain, R. in \cite{strain}. This reformulation increases the importance and the interest of the relativistic Boltzmann equation. With this reformulation, Strain, R. \textit{op.cit.} proved one global result in the Minkowski space-time. Lee, H. and Rendall, A., D. proved in \cite{lee1} the positivity of a possible solution  of the coupled Einstein-Boltzmann equation and also proved global solution in certain homogeneous cases in \cite{lee2, lee3}.

In this paper, we consider the relativistic Boltzmann equation with scattering kernel as in \cite{strain}, in RW space-time as indicated earlier. More precisely, we consider a relativistic gas of massive, uncharged particles in the RW space-time. In such case we look at short-range interactions between particles, which are usually modeled by a scattering kernel called hard sphere satisfying (\ref{eq:2.9}). For more details about the relativistic hard sphere interactions and their physical motivations we refer to \cite{dud1} and the references therein. 


The paper is organized as follows: In section 2, we introduce a change of variable to write the mild form of Boltzmann equation, we also specify the functions spaces in which we will seek the solution and then we make the main assumptions of the paper. Some preliminary results are given in section 3, whereas section 4 is devoted to the existence theorem of the relativistic Boltzmann equation.

\section{The inhomogeneous Boltzmann equation and Functional spaces}
\subsection{The equation and collision operator}
 We recall that we consider as background the RW space-time where the metric tensor with signature (--, +, +, +) can be
written as:
\begin{equation}\label{eq:2.1}
ds^{2} = -dt^{2} + R^{2}(t)[(dx^{1})^{2} + (dx^{2})^{2} + (dx^{3})^{2}]
\end{equation}
in which $ R(t) $ is a strictly positive function of $t$.

Let's also recall the general form of the Boltzmann equation on the curve space-time
\begin{equation}\label{eq:2.2}
p^{\alpha}\frac{\partial f}{\partial x^{\alpha}} - \Gamma^{i}_{\alpha\beta}p^{\alpha}p^{\beta}\frac{\partial f}{\partial p^{i}} = \tilde{Q}(f, f).
\end{equation}
In (\ref{eq:2.2}), $ \Gamma^{i}_{\alpha\beta} $ denote the Christoffel symbols of the metric considered, $ \tilde{Q} $ is a non-linear operator called "collision operator" and it will be specified  in detail shortly.

 Greek indices will be assumed to run from $0$ to $3$, while  latin indices run from $1$ to $3$, in (\ref{eq:2.2}) unless otherwise specified. We adopt the Einstein summation convention $ a_{\alpha}b^{\alpha} = \sum a_{\alpha}b^{\alpha}$. Note that $p^{\alpha}=(p^{0},p^{1},p^{2},p^{3})$ and $p=(p^{1},p^{2},p^{3})$.

After some computations, the relativistic Boltzmann equation in the RW space-time can be written as follows
%
\begin{equation}\label{eq:2.3}
\partial_{t}f + \hat{p}.\nabla_{x} f - 2\frac{\dot{R}}{R}p.\nabla_{p}f = Q(f, f)
\end{equation}
where $ \hat{p} $ is defined by $ \hat{p} = \frac{p}{p^{0}} $.

Let's now give the precise form of the collision operator. In instantaneous, binary and elastic scheme due to Lichnerowicz and Chernikov \cite{u}, we consider that at a given position $ x $, two particles(or two beans of particles) of momenta $ p^{\alpha} $ and $ q^{\alpha} $ collide without destroying each other. The collision affecting only their momenta  that change after the collision. Let $ p'^{\alpha} $ and $ q'^{\alpha} $ be their momenta after the collision. By conservation of the energy-momentum principle, one has:
\begin{equation}\label{eq:2.4}
p^{\alpha} + q^{\alpha} = p'^{\alpha} + q'^{\alpha}.
\end{equation}
The collision operator $ Q $ is then defined 
by the relation

$ Q(f, g) = Q_{g}(f, g) - Q_{l}(f, g) $ where:
\begin{eqnarray}
Q_{g}(f, g)(t, x, p) &= \int_{\mathbb{R}^{3}}\int_{S^{2}}\frac{g\sqrt{s}}{p^{0}q^{0}}  \sigma(g, \omega) f(t, x, p')g(t, x, q')d\omega dq \label{eq:2.5}\\
Q_{l}(f, g)(t, x, p) &= \int_{\mathbb{R}^{3}}\int_{S^{2}}\frac{g\sqrt{s}}{p^{0}q^{0}}\sigma(g, \omega) f(t, x, p)g(t, x, q)d\omega dq \label{eq:2.6}
\end{eqnarray}
correspond to the gain term and the lost term respectively. For simplicity, we abbreviate $ f(t, x, p) $, $ f(t, x, q) $, $ f(t, x, p') $ and $ f(t, x, q') $ by $ f(p) $, $ f(q) $,  $ f(p') $ and $ f(q') $ respectively.
The quantity $ v_{\phi} = \frac{g\sqrt{s}}{p^{0}q^{0}} $ is called M{\o}ller velocity.

In this paper $(p,q)$ and $(p',q')$ are pre-collisional and post-collisional momentum respectively, satisfying (\ref{eq:2.13}).

The quantities $ g $ and $ s $ defined as follows:
\begin{equation}\label{eq:2.7}
s = -(p_{\alpha} + q_{\alpha})(p^{\alpha} + q^{\alpha}), \quad \quad g = \sqrt{(p_{\alpha} - q_{\alpha})(p^{\alpha} - q^{\alpha})}
\end{equation}
are called respectively the square of the energy in the "center of momentum" system $ p + q = 0 $ and $ g $ the relative momentum.

$ \sigma $ is called the differential cross-section or scattering kernel; it depends on the relative momentum and the scattering angle $ \theta $ defined by the relation (\ref{eq:3.3}). Note that the parameter $ \omega $ over the unit sphere and the scattering angle $ \theta $ are linked by (\ref{eq:3.61}); that is why it is just written as $ \sigma(g, \omega) $.  $ \sigma(g, \omega) $  measures interaction's effects between particles during the collision process. The scattering kernel in relativistic kinetic theory is classified into soft and hard potentials.

\begin{itemize}
\item[-] For soft potentials, one assumes that there exists $ \gamma > -2 $ and \\ $ 0 < b < min\{4, 4 + \gamma\} $ such that the scattering kernel $ \sigma(g, \omega) $ satisfies the following growth/decay estimates:
\begin{equation}\label{eq:2.8}
\frac{g}{\sqrt{s}}g^{-b}\sigma_{0}(\omega) \lesssim \sigma(g, \omega) \lesssim g^{-b}\sigma_{0}(\omega), \quad \sigma_{0}(\omega) \lesssim sin^{\gamma}\theta.
\end{equation}
\item[-] For hard potentials, one assumes that there exists $ \gamma > -2 $, $ 0 \leq a \leq \gamma + 2 $  and $ 0 < b < min\{4, 4 + \gamma\} $ such that the scattering kernel $ \sigma(g, \omega) $ satisfies the following growth/decay estimates:
\begin{equation}\label{eq:2.9}
\frac{g}{\sqrt{s}}g^{a}\sigma_{0}(\omega) \lesssim \sigma(g, \omega) \lesssim (g^{a} + g^{-b})\sigma_{0}(\omega), \quad \sigma_{0}(\omega) \lesssim sin^{\gamma}\theta.
\end{equation}
\end{itemize}
The notation $ a \lesssim b $ means that a positive constant C exists  such that $ a \leq Cb $ holds uniformly over the range of parameters which are present in the inequality and moreover that the precise magnitude of the constant is unimportant. The notation $ a\approx b $  means that both $ a \lesssim b $  and $ b \lesssim a $ hold.

In the sequel, we let sometimes C and c denote  generic and positive inessential constants whose values may change from line to line.

\subsection{Hypothesis on the scattering kernel and the cosmological expansion factor}
 In the present work, we suppose that the scattering kernel $ \sigma(g, \omega) $ is for hard potentials case with $ a = 0 $. So, We assume that there exists  $ b \in ]0, 4[ $ such that the scattering kernel $ \sigma(g, \omega) $ satisfies the following growth/decay estimates:
\begin{equation}\label{eq:2.10}
\frac{g}{\sqrt{s}}\sigma_{0}(\omega) \lesssim \sigma(g, \omega) \lesssim (1 + g^{-b})\sigma_{0}(\omega)
\end{equation}
where $ \sigma_{0}(\omega) $ is non-negative, bounded, continuous and satisfies the following relation
\begin{equation}\label{eq:2.11}
\int_{S^{2}}\sigma_{0}(\omega)e^{-|w.y|^{2}} \lesssim e^{-|y|^{2}}, \quad \forall y \in \mathbb{R}^{3}\,\, \text{ such that} \,\, \, |y| \geq 1.
\end{equation}
About the cosmological expansion factor, we also assume that
\begin{equation}\label{eq:2.12}
\hspace{-0.01cm}R(0) = 1, \  R'(t) > 0, \  \underset{n \rightarrow +\infty}{lim}\!R(t) = +\infty, \  \int_{\mathbb{R}_{+}}\!\! (R^{-3}(t) + R^{b-4}(t))dt < + \infty.
\end{equation}
\begin{remark}  A scattering kernel enjoying (\ref{eq:2.10})-(\ref{eq:2.11}) falls into the hard potential case.
\end{remark}
\begin{remark} In \cite{glassey2}-Section 4, 
the post-collisional momenta were parametrized as follows: suppose that two particles having momenta $ V^{\alpha} $ and $ U^{\alpha} $ collide, and let $ V'^{\alpha} $ and $ U'^{\alpha} $ be their momenta after the collision. Under the energy-momentum conservation principle $ V^{\alpha} + U^{\alpha} =  V'^{\alpha} + U'^{\alpha}$, the following relations hold for $ \omega \in S^{2} $.
\begin{equation}\label{eq:2.120}
\begin{cases}
V' = V - A (V, U, \omega )\omega\\
U' = U + A (V, U, \omega )\omega \,
\end{cases}  with \, A  = \frac{2U^{0}V^{0}(V^{0} + U^{0})\omega.(\frac{V}{V^{0}} - \frac{U}{U^{0}})}{(V^{0} + U^{0})^{2} - (\omega.(V + U))^{2}}.
\end{equation}
\end{remark}
\begin{remark}
In (\ref{eq:2.120}), if we set $ V = Rp $, $ U = Rq $, $ V' = Rp' $ and $ U' = Rq' $, from (\ref{eq:2.4}), we have $ V^{\alpha} + U^{\alpha} =  V'^{\alpha} + U'^{\alpha}$. Then (\ref{eq:2.120}) holds for $U$ and $V$. from this we obtain the following relation between $(p, q)$ and $ (p', q') $
\begin{equation}\label{eq:2.13}
\begin{cases}
p' = p - \tilde{a} (p, q, \omega )\omega\\
q' = q + \tilde{a}(p, q, \omega )\omega \, ; \qquad
\omega \in S^{2}
\end{cases}
\end{equation}
in which, setting $ e = p^{0} + q^{0} $,  $\tilde{a}(p, q, \omega )$ is a real-valued function given by:
\begin{equation} \label{eq:2.14}
\tilde{a}(p, q, \omega ) = \frac{2\,p^{0}q^{0} e\,
\omega.(\hat{p} - \hat{q})}{e^{2} -
R^{2}(\omega.(p + q))^{2}}.
\end{equation}
As parametrization of the post-collisional momenta, we adopt (\ref{eq:2.13})-(\ref{eq:2.14}).
\end{remark}

\subsection{Mild form of the Boltzmann equation and functional space}
In the sequel, we consider (\ref{eq:2.3}) with covariant variables. To be explicit, the distribution function $ f $ will be considered as a function of $ t $, $ x $ and $ p_{k} = g_{k\beta}p^{\beta} = R^{2}p^{k} $, with $ k = 1, 2, 3 $. This change of variable was previously used  in \cite{lee2, lee3}.
In what follows, for simplicity, we set
\begin{equation}\label{eq:2.15}
v = (v^{1}, v^{2}, v^{3}) \quad where \quad v^{k} = R^{2}p^{k} \quad and \quad v^{0} = \sqrt{1 + R^{-2}|v|^{2}}.
\end{equation}

With these new variables, setting $ v'^{k} = R^{2}p'^{k} $ and $ u'^{k} = R^{2}q'^{k} $, the post-collisional momentum are parametrized as follows.
\begin{equation}\label{eq:2.16}
\begin{cases}
v' = v - a(v, u, \omega )\omega\\
u' =u + a(v, u, \omega )\omega \, ; \qquad
\omega \in S^{2}
\end{cases}
\end{equation}
where setting $ \hat{v} = \frac{v}{v^{0}} $ and $ \hat{u} = \frac{u}{u^{0}} $, the real valued function $ a $  is given by
\begin{equation}\label{eq:2.17}
a(v, u, \omega ) = \frac{2\,v^{0}u^{0} e\,
\omega.(\hat{v} - \hat{u})}{e^{2} -
R^{-2}(\omega.(v + u))^{2}}.
\end{equation}
With these variables we can now rewrite (\ref{eq:2.3}). Let's set\\ $\tilde{f}(t, x, v) = f(t, x, p)$. We have
\begin{equation}\label{eq:2.18}
\partial_{t}\tilde{f} = \partial_{t} f - 2\frac{\dot{R}}{R^{3}}v.\nabla_{p}f = \partial_{t} f - 2\frac{\dot{R}}{R}p.\nabla_{p}f
\end{equation}
\begin{equation}\label{eq:2.19}
\partial_{x_{i}}\tilde{f} = \partial_{x_{i}}{f}
\end{equation}
Straightforward computation leads to $ dp = R^{-6}dv $. In what follows, we will write $ f $ instead of $ \tilde{f} $. So, with the new variables, the collision operator reads
\begin{align}\label{eq:2.20}
Q(f, f)(t, x, v) &= R^{-3}(t)\int_{S^{2}}d\omega \int_{\mathbb{R}^{3}}du v_{\phi}\sigma(g, \omega)[f(v')f(u') - f(v)f(u)] \nonumber \\
&= Q_{g}(f, f)(t, x, v) - Q_{l}(f, f)(t, x, v).
\end{align}
Taking into account (\ref{eq:2.18}) and (\ref{eq:2.19}), the Boltzmann equation (\ref{eq:2.3}) becomes
\begin{equation}\label{eq:2.21}
\partial_{t}f + \frac{1}{R^{2}}\hat{v}.\nabla_{x}f =  Q_{g}(f, f)(t, x, v) - Q_{l}(f, f)(t, x, v).
\end{equation}
\subsubsection{Characteristic's equation}
Let's consider the equation (\ref{eq:2.21}) which is the first order partial differential equation. For any fixed $ (x, v) \in \mathbb{R}_{x}\times \mathbb{R}_{v} $, the characteristics $ X^{t}(x, v) $ are defined by the following relations
\begin{equation}\label{eq:2.22}
\frac{d}{dt} X^{t}(x, v) = R^{-2}(t)\hat{v}
\end{equation}
\begin{equation}\label{eq:2.23}
X^{t}(x, v)|_{t = 0} = x.
\end{equation}
From (\ref{eq:2.22}) and (\ref{eq:2.23}), we have
\begin{equation}\label{eq:2.24}
X^{t}(x, v) = x + \int_{0}^{t}R^{-2}(s) \hat{v}ds = x + \left(\int_{0}^{t}\frac{R^{-2}(s)ds}{\sqrt{1 + R^{-2}(s)|v|^{2}}}\right)v.
\end{equation}
Let's now introduce the standard notation in the Boltzmann equation
\begin{equation}\label{eq:2.25}
f^{\#}(t, x, v) = f(t, X^{t}(x, v), v)
\end{equation}
Using the notation (\ref{eq:2.25}), we have
\begin{align}\label{eq:2.26}
\frac{d}{dt}f^{\#}(t, x, v) &= \partial_{t}f +    \frac{\partial X^{i t} }{\partial t}\frac{\partial f}{\partial x^{i}}\nonumber\\
&= \partial_{t}f  + \frac{R^{-2}(t)}{\sqrt{1 + R^{-2}(t)|v|^{2}}}v.\nabla_{x}f\nonumber\\
&= \partial_{t}f  + R^{-2}(t)\hat{v}.\nabla_{x}f.
\end{align}
From (\ref{eq:2.26}), the equation (\ref{eq:2.21}) becomes
\begin{equation}\label{eq:2.27}
\frac{d}{dt}f^{\#}(t, x, v) = Q^{\#}(f, f)(t, x, v)
\end{equation}
where $ Q^{\#}(f,f) $ is given by:
\begin{equation*}
Q^{\#}(f, f)(s, x, v) = Q(f, f)(s, X^{s}(x, v), v).
\end{equation*}
(\ref{eq:2.27}) leads to the following equation
\begin{equation}\label{eq:2.28}
f^{\#}(t, x, v) =f_{0}(x, v) + \int_{0}^{t} Q^{\#}(f, f)(s, x, v)ds
\end{equation}
(\ref{eq:2.28}) is called the mild form of the Boltzmann equation. In what follows, we will focus on (\ref{eq:2.28}). 
\subsubsection{Functional space}
In the integral form of (\ref{eq:2.28}) for which we now look for a continuous bounded non-negative solution, we allow $ f $ to decay exponentially in $ v $ and $ x $. For this reason, we consider the weight function $ \rho $ defined by
\begin{equation}\label{eq:2.29}
\rho(x, v) = e^{(|v|^{2} + |x\times v|^{2})}.
\end{equation}
The function space in which we will seek the solution is defined as
\begin{equation}\label{eq:2.30}
M = \{f \in \mathcal{C}^{0}([0, +\infty[\times \mathbb{R}^{3}_{x}\times \mathbb{R}^{3}_{v}), \|f\| := \underset{t, x, v}{Sup}[\rho(x, v)|f(t, x, v)|] < +\infty\}.
\end{equation}

We can now state our main result.

\begin{theorem}
Define the operator $ \Gamma $ on M by
\begin{equation}
 \Gamma f^{\#} = f_{0}(x, v) + \int_{0}^{t}Q^{\#}(f, f)(\tau, x, v)d\tau
\end{equation}\label{eq:4.25}
 and let $ M_{r} = \{f \in M, \|f^{\#}\| \leq r\} $, under the assumptions (\ref{eq:2.10})-(\ref{eq:2.11}) on the collision kernel and (\ref{eq:2.12}) on the cosmological expansion factor, there exists a constant $ r_{0} $ such that if $ \|f_{0}\| $  is sufficiently small, the  integral equation $ \Gamma f^{\#} = f^{\#} $ has a unique solution $ f^{\#} \in M_{r_{0}} $.
\end{theorem}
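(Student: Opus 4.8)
The plan is to solve the fixed-point equation $\Gamma f^{\#}=f^{\#}$ by the Banach contraction principle on the closed ball $M_{r_{0}}$, which is a complete metric space as a closed subset of the Banach space $(M,\|\cdot\|)$. Before running the scheme I would record the geometric identity that makes it work. By (\ref{eq:2.24}) the flow is $X^{t}(x,v)=x+\lambda(t,v)\,v$ with the scalar $\lambda(t,v)=\int_{0}^{t}R^{-2}(s)\big(1+R^{-2}(s)|v|^{2}\big)^{-1/2}ds$, so $X^{t}(x,v)\times v=x\times v$ while $|v|$ is unchanged along the flow; hence the weight (\ref{eq:2.29}) satisfies $\rho(X^{t}(x,v),v)=\rho(x,v)$. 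Consequently $f\mapsto f^{\#}$ is a norm-preserving bijection of $M$, $\|f^{\#}\|=\|f\|$, and $M_{r_{0}}$ is invariant under this correspondence; this is exactly what lets me estimate $Q^{\#}$ by the norm of $f$ itself.

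The analytic core, and the step I expect to be the main obstacle, is a bilinear estimate of the shape
\begin{equation}
\Big\|\int_{0}^{t}Q^{\#}(f,g)(\tau,\cdot,\cdot)\,d\tau\Big\|\le C\,\|f^{\#}\|\,\|g^{\#}\|,\qquad C<\infty,
\end{equation}
valid uniformly in $t$ under (\ref{eq:2.10})--(\ref{eq:2.12}). To obtain it I would fix $(t,x,v)$, write $y=X^{s}(x,v)$ for $s\le t$, and bound the entries of the gain term pointwise, $|f(s,y,v')|\le\|f^{\#}\|\,\rho(y,v')^{-1}$ and $|g(s,y,u')|\le\|g^{\#}\|\,\rho(y,u')^{-1}$, which is legitimate precisely because of the invariance $\rho(X^{s}(\cdot))=\rho(\cdot)$ recorded above; the same invariance turns the outer factor $\rho(x,v)$ into $\rho(y,v)$. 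After inserting the M{\o}ller velocity and the upper bound $\sigma\lesssim(1+g^{-b})\sigma_{0}(\omega)$ from (\ref{eq:2.10}), the gain term is controlled by an $(\omega,u)$-integral of $\sigma_{0}(\omega)\,\rho(y,v)\rho(y,v')^{-1}\rho(y,u')^{-1}$. Using the collision rule (\ref{eq:2.16}), the explicit coefficient (\ref{eq:2.17}), and the momentum/energy conservation it encodes, the Gaussian exponent should collapse to a pure angular form $e^{-|\omega\cdot Y|^{2}}$; hypothesis (\ref{eq:2.11}) then replaces the $\omega$-integral by $e^{-|Y|^{2}}$ with $|Y|$ growing like $|u|$, so the remaining $u$-integral converges and $\rho(x,v)$ is fully absorbed. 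The loss term is handled the same way but more easily. What is left is a time integral whose $R$-weight is exactly $R^{-3}(s)$ from (\ref{eq:2.20}) together with the $g^{-b}$ contribution of $\sigma$, i.e. a multiple of $\int_{0}^{\infty}\big(R^{-3}(s)+R^{b-4}(s)\big)\,ds$, finite by (\ref{eq:2.12}). The genuinely delicate points are checking that the exponent reduces cleanly to $-|\omega\cdot Y|^{2}$ and that $|Y|\gtrsim|u|$, which is where the precise shapes of $\rho$, of $a$, and of (\ref{eq:2.11}) must all fit together.

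Granting this estimate the remainder is routine. Taking $g=f$ gives $\|\Gamma f^{\#}\|\le\|f_{0}\|+C\|f^{\#}\|^{2}$, so $\Gamma$ maps $M_{r}$ into itself as soon as $\|f_{0}\|+Cr^{2}\le r$; this quadratic inequality is solvable exactly when $\|f_{0}\|\le 1/(4C)$, and I would take $r_{0}=\tfrac{1-\sqrt{1-4C\|f_{0}\|}}{2C}$, the smaller root, which tends to $0$ as $\|f_{0}\|\to0$. For the contraction I use the bilinearity of $Q$ in the form $Q(f,f)-Q(h,h)=Q(f-h,f)+Q(h,f-h)$, whence $\|\Gamma f^{\#}-\Gamma h^{\#}\|\le C(\|f^{\#}\|+\|h^{\#}\|)\|f^{\#}-h^{\#}\|\le 2Cr_{0}\|f^{\#}-h^{\#}\|$ with $2Cr_{0}=1-\sqrt{1-4C\|f_{0}\|}<1$. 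Thus $\Gamma$ is a contraction of the complete space $M_{r_{0}}$ and admits a unique fixed point there. Finally I would note that $\Gamma$ preserves continuity, since its integrand is continuous and dominated, so the fixed point indeed lies in $M$, and non-negativity of the data propagates through the standard monotone iteration because the gain term is non-negative; this yields the asserted unique solution $f^{\#}\in M_{r_{0}}$.
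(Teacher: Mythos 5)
Your outer scaffolding is exactly the paper's: Banach fixed point on $M_{r_0}$, self-mapping from the quadratic bound $\|f_0\|+Cr^2\le r$, contraction from bilinearity (your explicit $r_0$ and contraction constant $2Cr_0<1$ are in fact slightly cleaner than the paper's condition $\|f_0\|\le r/2$, $r\le 1/(2c)$). Your observation that $X^t(x,v)\times v=x\times v$, hence $\rho(X^t(x,v),v)=\rho(x,v)$ and $\|f^\#\|=\|f\|$, is correct and is a tidy substitute for the paper's device of writing $f(\tau,X^\tau(x,v),u)=f^\#(\tau,x+b(\tau,u,v),u)$ with $b$ as in (\ref{eq:4.2}). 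But all of this is the routine part: the entire mathematical content of the theorem is the bilinear estimate that you display and then defer, and the heuristic you give for it is not merely incomplete — it points in a direction where the estimate is false. In the variables (\ref{eq:2.15}), $|v|^2+|u|^2$ is \emph{not} a collision invariant: from (\ref{eq:2.16}) one computes $|v|^2+|u|^2-|v'|^2-|u'|^2=2a\,\omega\cdot(v-u)-2a^2$ (the identity inside Lemma 3.4), and this defect can be positive and of order $|v|^2$: at any fixed time, taking $u=0$ and $|v|\gg R(t)$ (ultrarelativistic regime, where conservation of $v^0+u^0$ roughly means conservation of $|v'|+|u'|$), a collision splitting the momentum into two nearly parallel halves has $|v'|^2+|u'|^2\approx|v|^2/2$. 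Consequently the pointwise bound $\rho(y,v')^{-1}\rho(y,u')^{-1}\lesssim e^{-|v|^2-|u|^2}$, which your plan needs both to absorb $\rho(y,v)=\rho(x,v)$ and to produce the decay in $u$, fails on all of $S^2$. The paper's way out is to restrict the angular integration to $S^2_+=\{\omega:\,a\,\omega\cdot(v-u)\le B\}$ and to prove the theorem for the collision operator so restricted (Lemmas 3.3, 3.4 and Remark 3.1, following \cite{lee3}); nothing in your proposal plays this role.

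The second problem is that you misassign what hypothesis (\ref{eq:2.11}) delivers. You expect the angular integral to output $e^{-|Y|^2}$ with $|Y|\gtrsim|u|$, i.e.\ to generate the $u$-integrability. It cannot: in the correct argument the factor $e^{-|u|^2}$ comes from the energy defect above (valid only on $S^2_+$), whereas (\ref{eq:2.11}) is applied with $Y=v\times x$ — independent of $u$ — in order to reconstruct the factor $e^{-|x\times v|^2}$ of $\rho(x,v)^{-1}$, which nothing else in the gain term produces. Moreover, before (\ref{eq:2.11}) can be invoked one must prove the purely geometric lower bound
\[
D=\bigl|X^\tau(x,v)\times v'\bigr|^2+\bigl|X^\tau(x,v)\times u'\bigr|^2\;\ge\;\bigl|\omega\cdot(v\times x)\bigr|^2,
\]
which the paper obtains by viewing $D$ as a quadratic polynomial in $\chi(\tau)=\int_0^\tau R^{-2}(s)\bigl(1+R^{-2}(s)|v|^2\bigr)^{-1/2}ds$ and bounding its discriminant from below (the computation from (\ref{eq:4.10}) to (\ref{eq:4.20})); this is the longest and most delicate computation in the paper, and your sketch offers no substitute for it. Since these two points are precisely the ``genuinely delicate points'' you set aside, the proposal as it stands is the standard fixed-point wrapper around a missing core, and the guesses it makes about that core do not survive contact with the actual estimates.
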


Before giving the proof of our main result, we are going to collect some fundamental estimates.
\section{Preliminaries results}
\begin{lemma} The relative momentum enjoys the following estimates:
\begin{equation}\label{eq:3.1}
g \leq 2\sqrt{p^{0}q^{0}}, \quad and \quad \frac{R^{4}|p\times q|^{2} + R^{2}|p - q|^{2}}{p^{0}q^{0}} \leq g^{2} \leq R^{2}|p - q|^{2}.
\end{equation}
\end{lemma}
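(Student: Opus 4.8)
The plan is to reduce everything to a single explicit formula for $g^{2}$ together with one elementary scalar inequality. First I would lower the indices using the RW metric, so that $p_{0} = -p^{0}$ and $p_{i} = R^{2}p^{i}$, and expand the definition of $g$ in (\ref{eq:2.7}) to obtain
\begin{equation*}
g^{2} = (p_{\alpha} - q_{\alpha})(p^{\alpha} - q^{\alpha}) = -(p^{0} - q^{0})^{2} + R^{2}|p - q|^{2}.
\end{equation*}
Invoking the mass-shell relation $(p^{0})^{2} = 1 + R^{2}|p|^{2}$ (and likewise for $q^{0}$) to cancel the quadratic terms then yields the compact form
\begin{equation*}
g^{2} = 2\big(p^{0}q^{0} - R^{2}\,p\cdot q - 1\big),
\end{equation*}
which will be the workhorse of the proof.

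The two easier bounds follow at once. The upper bound $g^{2} \leq R^{2}|p-q|^{2}$ is immediate from the first display, since the discarded term $-(p^{0}-q^{0})^{2}$ is non-positive. For the first estimate $g \leq 2\sqrt{p^{0}q^{0}}$, I would use the workhorse formula together with the Cauchy--Schwarz bound $-p\cdot q \leq |p||q|$ and the elementary inequality $R^{2}|p||q| \leq p^{0}q^{0}$ (which is just $R^{2}|p||q| \leq \sqrt{(1+R^{2}|p|^{2})(1+R^{2}|q|^{2})}$); dropping the harmless $-1$ and combining these gives $g^{2} \leq 4\,p^{0}q^{0}$.

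The heart of the matter is the lower bound, and here the main obstacle is to recognise the right algebraic identity. Using $|p\times q|^{2} = |p|^{2}|q|^{2} - (p\cdot q)^{2}$, the expansion of $|p-q|^{2}$, and the mass shell, I expect to verify the factorisation
\begin{equation*}
R^{4}|p\times q|^{2} + R^{2}|p-q|^{2} = (p^{0}q^{0})^{2} - (R^{2}\,p\cdot q + 1)^{2} = \frac{g^{2}}{2}\,\big(p^{0}q^{0} + R^{2}\,p\cdot q + 1\big),
\end{equation*}
the last equality using the workhorse formula. Dividing by $p^{0}q^{0}$, the claimed inequality $\frac{R^{4}|p\times q|^{2} + R^{2}|p-q|^{2}}{p^{0}q^{0}} \leq g^{2}$ becomes equivalent to $p^{0}q^{0} + R^{2}\,p\cdot q + 1 \leq 2\,p^{0}q^{0}$, that is, to the single scalar inequality
\begin{equation*}
R^{2}\,p\cdot q + 1 \leq p^{0}q^{0}.
\end{equation*}
This last inequality I would settle by Cauchy--Schwarz followed by squaring both (non-negative) sides: the difference $(p^{0}q^{0})^{2} - (R^{2}|p||q|+1)^{2}$ collapses to $R^{2}(|p|-|q|)^{2} \geq 0$. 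Thus the only genuinely non-trivial step is spotting the difference-of-squares identity in the last display; once it is in hand, the positivity of $g^{2}$ makes the division legitimate and the three estimates all fall out.
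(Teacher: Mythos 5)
Your proof is correct and follows essentially the same route as the paper: the same workhorse formula $g^{2}=2\,(p^{0}q^{0}-R^{2}p\cdot q-1)$, the same difference-of-squares identity $(p^{0}q^{0})^{2}-(1+R^{2}p\cdot q)^{2}=R^{4}|p\times q|^{2}+R^{2}|p-q|^{2}$, and the same elementary bound $1+R^{2}p\cdot q\leq p^{0}q^{0}$ drive the lower estimate, the only cosmetic differences being that the paper writes the lower bound as the quotient $g^{2}=2\,\frac{R^{4}|p\times q|^{2}+R^{2}|p-q|^{2}}{p^{0}q^{0}+1+R^{2}p\cdot q}$ rather than your product form, and proves $g^{2}\leq R^{2}|p-q|^{2}$ by evaluating the discarded term $(p^{0}-q^{0})^{2}$ exactly via the angle between $p-q$ and $p+q$, where you simply drop it. One small polish: since $g^{2}$ vanishes when $p=q$, avoid phrasing the last step as an equivalence obtained by dividing by $g^{2}$ and instead conclude directly from $R^{4}|p\times q|^{2}+R^{2}|p-q|^{2}=\frac{g^{2}}{2}\,(p^{0}q^{0}+R^{2}p\cdot q+1)\leq g^{2}\,p^{0}q^{0}$, which needs only $g^{2}\geq 0$.
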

\begin{proof}
It's obvious to prove the relation $ s = g^{2} + 4 $. As a consequence, $ s \geq 4 $. On the another hand, for $ s $, we have:
\begin{equation*}
s = -p^{\alpha}p_{\alpha} -q^{\alpha}q_{\alpha} - 2p^{\alpha}q_{\alpha} = 2 + 2p^{0}q^{0} - 2g_{ij}p^{i}q^{j} = 2p^{0}q^{0} + 2- 2R^{2}p.q.
\end{equation*}
Since $ 1 - R^{2}p.q \leq \sqrt{1 + R^{2}|p|^{2} + R^{2}|q|^{2} + R^{4}|p|^{2}|q|^{2}} = p^{0}q^{0}$, it follows that $ s \leq 4p^{0}q^{0} $ and then $ g=\sqrt{s-4} \leq 2\sqrt{p^{0}q^{0}} $.

- For proving the first part of the second inequality, we use the elementary estimate $ 1 + R^{2}p.q \leq p^{0}q^{0}$. We then have

 \begin{align*}
g^{2} &= -2 + 2p^{0}q^{0} - 2R^{2}p.q\\
&= 2\frac{(p^{0}q^{0})^{2}  - (1 + R^{2}p.q)^{2}}{p^{0}q^{0} + 1 + R^{2}p.q}
\\&= 2\frac{(1 + R^{2}|p|^{2})(1 + R^{2}|q|^{2}) -1  - R^{4}(p.q)^{2} - 2R^{2}p.q}{p^{0}q^{0} + 1 + R^{2}p.q}
\\&= 2\frac{R^{2}|p|^{2} + R^{2}|q|^{2} -1  - R^{4}(|p|^{2}|q|^{2} - (p.q)^{2}) - 2R^{2}p.q}{p^{0}q^{0} + 1 + R^{2}p.q}\\&= 2\frac{R^{4}|p\times q|^{2} + R^{2}|p - q|^{2}}{p^{0}q^{0} + 1 + R^{2}p.q}\\
&\geq 2\frac{R^{4}|p\times q|^{2} + R^{2}|p - q|^{2}}{2p^{0}q^{0}}.
\end{align*}

- About the last inequality, let $ \theta_{0} $ be the angle between $ p - q $ and $ p + q $. We have $ (p^{0})^{2} - (q^{0})^{2} = R^{2}|p - q||p + q|cos\theta_{0} $. On the other hand
\begin{align*}
g^{2} &= -(p^{0} - q^{0})^{2} + (p^{0})^{2} + (q^{0})^{2} - 2(1 + R^{2}p.q)\\
&=  -(p^{0} - q^{0})^{2} + R^{2}(|p|^{2} + |q|^{2} - 2p.q)\\
&= R^{2}|p - q|^{2} - R^{4}\left[\frac{(p - q)(p + q)cos\theta_{0} }{p^{0} + q^{0}}\right]^{2}\\
&= R^{2}|p - q|^{2}\left[1 - \frac{R^{2}|p + q|^{2}cos^{2}\theta_{0}}{(p^{0} + q^{0})^{2}} \right]\\
&\leq R^{2}|p - q|^{2}.
\end{align*}
\end{proof}

\begin{lemma}
\end{lemma} The function $ a(u, v, \omega) $ enjoys the estimates
\begin{equation}\label{eq:3.2}
\frac{2p^{0}q^{0}|\omega.(\hat{p} - \hat{q})|}{e} \leq |\tilde{a}(p, q, \omega)| \leq \frac{ e |p - q|}{\sqrt{e^{2} - R^{2}|p + q|^{2}}} = \frac{e|p - q|}{\sqrt{s}}.
\end{equation}
\begin{proof}

The lower bound is trivial since $ e^{2} - R^{2}(\omega.(p + q))^{2} \leq e^{2} $.

The scattering angle $ \theta $ such that
\begin{equation}\label{eq:3.3}
\cos \theta = \frac{(p^{\alpha} - q^{\alpha})(p'_{\alpha} - q'_{\alpha})}{g^{2}}
\end{equation}
is  well defined under the energy-momentum conservation principle; see \cite{glassey1}, Lemma 3.15.3. Let's compute the numerator $ N $ of $ \cos \theta $.
\begin{align*}
(p^{\alpha} - q^{\alpha})(p'_{\alpha} - q'_{\alpha}) &= -(p^{0} - q^{0})(p'^{0} - q'^{0}) + R^{2}(p - q)(p' - q')\\
&= -(p^{0} - q^{0})(p'^{0} - q'^{0}) + R^{2}(p - q)(p - q - 2 \tilde{a}\omega)
\end{align*}
The term $ p'_{0} - q'_{0} $ is given by
\begin{align*}
p'^{0} - q'^{0} &= \frac{(p'^0)^{2} - (q'^{0})^{2}}{e} = \frac{R^{2}(|p'|^{2} - |q'|^{2})}{e}\\
&= \frac{R^{2}(|p|^{2} - |q|^{2} - 2\tilde{a}\omega(p + q))}{e}\\
&= \frac{(p^{0})^{2} - (q^{0})^{2} - 2R^{2}\tilde{a} \omega.(p + q)}{e}
\end{align*}

\begin{align*}
(p^{0} - q^{0})(p'^{0} - q'^{0}) &= \frac{(p^{0} - q^{0})[(p^0)^{2} - (q^{0})^{2} - 2R^{2}\tilde{a}\omega.(p + q)]}{e} \\&= \frac{(p^{0} + q^{0})(p^0 - q^{0})^{2} - 2R^{2}\tilde{a}(p^{0} - q^{0})\omega.(p + q)}{e} .
\end{align*}
The numerator of $ \cos \theta $ becomes

\begin{align*}
N &= -(p^{0} - q^{0})^{2} + \frac{2R^{2}\tilde{a}(p^{0} - q^{0})\omega.(p + q)}{e} + R^{2}|p - q|^{2} - 2R^{2}\tilde{a}\omega.(p - q)\\
&= g^{2} -  \frac{2R^{2}\tilde{a}\omega}{e}[(p^{0} + q^{0})(p - q) - (p^{0} - q^{0})(p + q)]\\
&= g^{2} - \frac{2R^{2}\tilde{a}\omega}{e}[2q^{0}p - 2p^{0}q]\\
&= g^{2} - \frac{4p^{0}q^{0}R^{2}\tilde{a}\omega.(\hat{p} - \hat{q})}{e}\\
&= g^{2} - 2\tilde{a}R^{2}\frac{2p^{0}q^{0}e\omega.(\hat{p} - \hat{q})}{e^{2} - R^{2}(\omega.(p + q))^{2}}\times \frac{e^{2} - R^{2}(\omega.(p + q))^{2}}{e^{2}}\\
&= g^{2} - 2 R^{2}\tilde{a}^{2}\frac{e^{2} - R^{2}(\omega.(p + q))^{2}}{e^{2}}.
\end{align*}
From the relation $ \cos \theta = 1 - 2\sin^{2}\frac{\theta}{2} $, and the relation $ |\omega.(p + q)| \leq |p + q| $  it follows that
\begin{align*}
\sin^{2}\frac{\theta}{2} = R^{2}\tilde{a}^{2}\frac{e^{2} - R^{2}(\omega.(p + q))^{2}}{g^{2}e^{2}} \Rightarrow |\tilde{a}(p, q, \omega)| &\leq \frac{R^{-1}ge}{\sqrt{e^{2} - R^{2}|p + q|^{2}}}\\
&\leq  \frac{e|p - q|}{\sqrt{s}}.
\end{align*}
\end{proof}
\begin{corollary}
 The function $a(v, u, \omega)$ and $ \omega.(\hat{v} - \hat{u}) $ enjoy the following estimates:
\begin{equation}\label{eq:3.4}
|a(v, u, \omega)| \leq \frac{R^{-2} e|v - u|}{\sqrt{e^{2} - R^{-2}|v + u|^{2}}},
\end{equation}
\begin{equation}\label{eq:3.5}
|\omega.(\hat{v} - \hat{u})| \lesssim \frac{Rge}{v^{0}u^{0}} \lesssim \frac{e|v - u|}{v^{0}u^{0}}.
\end{equation}
\end{corollary}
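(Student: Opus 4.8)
The plan is to treat both estimates as transcriptions of the earlier $p$-variable computations into the covariant variables $v=R^{2}p$, $u=R^{2}q$, with all the work going into tracking the powers of $R$. First I would record the dictionary between the two sets of variables. Since $v^{0}=\sqrt{1+R^{-2}|v|^{2}}=\sqrt{1+R^{2}|p|^{2}}=p^{0}$ and likewise $u^{0}=q^{0}$, one has $e=v^{0}+u^{0}=p^{0}+q^{0}$, $\hat v=R^{2}\hat p$, $|v-u|=R^{2}|p-q|$ and $|v+u|=R^{2}|p+q|$. In particular $e^{2}-R^{-2}|v+u|^{2}=e^{2}-R^{2}|p+q|^{2}=s$, so the square roots appearing in (\ref{eq:3.4}) and in (\ref{eq:3.2}) coincide. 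Moreover, comparing the parametrizations $v'=v-a\omega$ and $v'=R^{2}p'=R^{2}(p-\tilde a\omega)$ yields the single relation $a(v,u,\omega)=R^{2}\tilde a(p,q,\omega)$.

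With this dictionary, (\ref{eq:3.4}) reduces to the preceding lemma: from $|a|=R^{2}|\tilde a|$, inserting the bound (\ref{eq:3.2}) on $\tilde a$ together with the substitutions $|p-q|=R^{-2}|v-u|$ and $\sqrt{s}=\sqrt{e^{2}-R^{-2}|v+u|^{2}}$ gives the estimate (\ref{eq:3.4}). Equivalently, one may re-run the scattering-angle computation of the previous lemma verbatim with $R^{2}$ replaced everywhere by $R^{-2}$ (this is exactly the substitution dictated by the new mass shell $(v^{0})^{2}=1+R^{-2}|v|^{2}$); it produces the identity $\sin^{2}\tfrac{\theta}{2}=R^{-2}a^{2}\,[e^{2}-R^{-2}(\omega.(v+u))^{2}]/(g^{2}e^{2})$, after which $\sin^{2}\tfrac{\theta}{2}\le 1$, $|\omega.(v+u)|\le|v+u|$ and the upper bound $g^{2}\le R^{-2}|v-u|^{2}$ (the $v$-form of (\ref{eq:3.1})) give the claim. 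The delicate point here is purely the bookkeeping: $g$ is a collision invariant and is unchanged under the substitution, whereas $|p-q|$ and $|v-u|$ differ by a factor $R^{2}$, so the powers of $R$ must be carried carefully at each step.

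For (\ref{eq:3.5}) I would produce an explicit identity for $|\hat v-\hat u|$. Starting from the mass-shell relation $(v^{0})^{2}=1+R^{-2}|v|^{2}$ one first checks $g^{2}=2(v^{0}u^{0}-1-R^{-2}v.u)$. Substituting the resulting expression $v.u=R^{2}(v^{0}u^{0}-1-\tfrac{1}{2}g^{2})$ into $|u^{0}v-v^{0}u|^{2}=(u^{0})^{2}|v|^{2}+(v^{0})^{2}|u|^{2}-2u^{0}v^{0}(v.u)$, and eliminating $|v|^{2},|u|^{2}$ via $|v|^{2}=R^{2}((v^{0})^{2}-1)$, collapses the whole expression to
\[
|u^{0}v-v^{0}u|^{2}=R^{2}\bigl[g^{2}v^{0}u^{0}-(v^{0}-u^{0})^{2}\bigr]\le R^{2}g^{2}v^{0}u^{0}.
\]
Dividing by $(v^{0}u^{0})^{2}$ gives $|\hat v-\hat u|\le Rg/\sqrt{v^{0}u^{0}}$, whence $|\omega.(\hat v-\hat u)|\le|\hat v-\hat u|\le Rg/\sqrt{v^{0}u^{0}}\le Rge/(v^{0}u^{0})$, the last step using $\sqrt{v^{0}u^{0}}\le v^{0}+u^{0}=e$. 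The second inequality of (\ref{eq:3.5}) is then immediate from $Rg\le|v-u|$, again by the upper bound in the $v$-form of (\ref{eq:3.1}).

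I expect the main obstacle to be the algebraic identity in (\ref{eq:3.5}): the cancellations producing $|u^{0}v-v^{0}u|^{2}=R^{2}[g^{2}v^{0}u^{0}-(v^{0}-u^{0})^{2}]$ are not visible term by term and only appear after systematically replacing $v.u$, $|v|^{2}$ and $|u|^{2}$ by expressions in $v^{0},u^{0},g$ through the mass-shell relations. Once that identity is in hand both estimates are short, and the only remaining risk is the consistent propagation of the powers of $R$ through the change of variables described in the first paragraph.
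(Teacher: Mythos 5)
Your treatment of (\ref{eq:3.4}) follows exactly the paper's own one-line proof (the paper simply says it is ``a direct consequence of (\ref{eq:2.14}), (\ref{eq:2.17}) and (\ref{eq:3.2})'', i.e.\ your dictionary $a=R^{2}\tilde a$), and your alternative scattering-angle identity $\sin^{2}\tfrac{\theta}{2}=R^{-2}a^{2}\,[e^{2}-R^{-2}(\omega.(v+u))^{2}]/(g^{2}e^{2})$ is also correct. But if you carry your own bookkeeping to the end, both routes give
\begin{equation*}
|a(v,u,\omega)|\ \le\ \frac{Rge}{\sqrt{e^{2}-R^{-2}|v+u|^{2}}}\ \le\ \frac{e\,|v-u|}{\sqrt{e^{2}-R^{-2}|v+u|^{2}}},
\end{equation*}
\emph{without} the prefactor $R^{-2}$: in the chain $|a|=R^{2}|\tilde a|\le R^{2}\,e|p-q|/\sqrt{s}$ the substitution $|p-q|=R^{-2}|v-u|$ cancels the $R^{2}$ exactly, and in the second route $g\le R^{-1}|v-u|$ cancels the $R$ in $Rge/\sqrt{s}$. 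So your assertion that this ``gives the estimate (\ref{eq:3.4})'' is off by a factor $R^{-2}$: what you actually proved is weaker than the printed inequality. Moreover the printed (\ref{eq:3.4}) cannot be rescued: fixing $v\neq u$ and letting $t\to\infty$ (so $R\to\infty$ by (\ref{eq:2.12})), formula (\ref{eq:2.17}) gives $a\to\omega.(v-u)$, while the right-hand side of (\ref{eq:3.4}) tends to $0$. The extra $R^{-2}$ is thus an error in the paper's statement, and you should have flagged the discrepancy instead of asserting agreement --- all the more so because the paper leans on precisely this spurious $R^{-2}$ decay in the proof of Lemma 3.3, where the bound $|av^{0}\omega.(\hat v-\hat u)|\le R^{-2}e^{2}|v-u|^{2}/(2\sqrt{s})$ is used to conclude that the first term of (\ref{eq:3.6}) tends to zero; with the correct $R$-free bound that conclusion no longer follows (indeed $a\,\omega.(v-u)\to(\omega.(v-u))^{2}$).

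For (\ref{eq:3.5}) your argument is correct and genuinely different from the paper's. The paper recycles the identity (\ref{eq:3.61}) from Lemma 3.2, solves for $|\omega.(\hat p-\hat q)|$, bounds $\sin\tfrac{\theta}{2}\le 1$ and $\sqrt{e^{2}-R^{-2}(\omega.(v+u))^{2}}\le e$, and translates back to $v,u$. You instead establish the algebraic identity $|u^{0}v-v^{0}u|^{2}=R^{2}\bigl[g^{2}v^{0}u^{0}-(v^{0}-u^{0})^{2}\bigr]$, which checks out: with $|v|^{2}=R^{2}((v^{0})^{2}-1)$ and $g^{2}=2(v^{0}u^{0}-1-R^{-2}v.u)$ (the correct transcription of $g^{2}=2p^{0}q^{0}-2-2R^{2}p.q$ from Lemma 3.1), the cross terms collapse as you claim. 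This yields $|\hat v-\hat u|\le Rg/\sqrt{v^{0}u^{0}}$, which is slightly sharper than needed (since $\sqrt{v^{0}u^{0}}\le e$) and controls the full vector $\hat v-\hat u$ rather than only its projection on $\omega$; it is the Robertson--Walker analogue of the classical Glassey--Strauss estimate $|\hat p-\hat q|\le g/\sqrt{p^{0}q^{0}}$. Both proofs are short; the paper's route produces as a by-product the exact $\omega$--$\theta$ relation it needs elsewhere, while yours avoids the scattering angle entirely and rests on a purely algebraic cancellation.
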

\begin{proof}

- (\ref{eq:3.4}) is a direct consequence of (\ref{eq:2.14}), (\ref{eq:2.17}) and (\ref{eq:3.2}).

- About (\ref{eq:3.5}), in the expression above of $ \sin^{2}\frac{\theta}{2} $, we replace $ \tilde{a}(p, q, \omega) $ by its expression (\ref{eq:2.14}). We then obtain
\begin{align}\label{eq:3.61}
\sin^{2}\frac{\theta}{2} &= R^{2}\left(\frac{2p^{0}q^{0}e \omega.(\hat{p} - \hat{q})}{e^{2} - R^{2}(\omega.(p + q))^{2}}\right)^{2}\frac{e^{2} - R^{2}(\omega.(p + q))^{2}}{g^{2}e^{2}}\nonumber\\
& = \frac{4R^{2}(p^{0}q^{0})^{2}(\omega.(\hat{p} - \hat{q}))^{2}}{g^{2}(e^{2} - R^{2}(\omega.(p + q))^{2})}.
\end{align}
This is the precise relationship between the parameter $ \omega $ over the sphere and scattering angle $ \theta $.
We then deduce that
\begin{equation*}
|\omega.(\hat{p} - \hat{q})| = R^{-1}\frac{g\sin \frac{\theta}{2}\sqrt{e^{2} - R^{2}(\omega.(p + q))^{2}}}{2p^{0}	q^{0}}.
\end{equation*}
Returning to the variables $ v $ and $ u $, this leads to
\begin{equation*}
|\omega.(\hat{v} - \hat{u})| = \frac{Rg\sin \frac{\theta}{2}\sqrt{e^{2} - R^{-2}(\omega.(v + u))^{2}}}{2v^{0}	u^{0}} \leq \frac{Rge}{v^{0}u^{0}} \leq \frac{e|v - u|}{v^{0}u^{0}}.
\end{equation*}
\end{proof}

\begin{lemma} Given a positive constant $ B $,  for fixed  $ v $ and $ u $, there exists $ t_{0} \in \mathbb{R}_{+} $ such that in $ [t_{0}, +\infty[ $, $\Omega(t) = a\omega.(v - u)  $ is bounded from above by B.
\end{lemma}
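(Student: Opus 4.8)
The plan is to dominate $\Omega(t)$ by the bound on $a$ already obtained in the preceding corollary and then to show that this dominating quantity vanishes as $t\to+\infty$, driven by the growth of $R$ prescribed in (\ref{eq:2.12}). Since $|\omega|=1$ gives $\omega\cdot(v-u)\le|v-u|$, I would first write $\Omega(t)\le|a(v,u,\omega)|\,|v-u|$ and insert the estimate (\ref{eq:3.4}) to get
\begin{equation*}
\Omega(t)\le\frac{R^{-2}(t)\,e\,|v-u|^{2}}{\sqrt{e^{2}-R^{-2}(t)|v+u|^{2}}}.
\end{equation*}
The right-hand side depends on $t$ only through $R(t)$ (and through $e=v^{0}+u^{0}$, $v^{0}=\sqrt{1+R^{-2}|v|^{2}}$); the whole matter then reduces to controlling the factor $e$ from above and the denominator from below, uniformly in $t$, so that the prefactor $R^{-2}(t)$ governs the limit.

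For the denominator I would use the identity $e^{2}-R^{-2}(t)|v+u|^{2}=2+2v^{0}u^{0}-2R^{-2}v\cdot u=s$, obtained exactly as in the proof of the relative-momentum lemma above; combined with $s=g^{2}+4\ge 4$ this yields $\sqrt{e^{2}-R^{-2}(t)|v+u|^{2}}\ge 2$, so the denominator never degenerates. For the numerator, the conditions $R(0)=1$ and $R'(t)>0$ in (\ref{eq:2.12}) force $R(t)\ge 1$ for all $t\ge 0$, hence $R^{-2}(t)\le 1$ and $e\le\sqrt{1+|v|^{2}}+\sqrt{1+|u|^{2}}=:e_{\max}$. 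These two facts give, for fixed $v,u$,
\begin{equation*}
\Omega(t)\le\frac{1}{2}\,e_{\max}\,|v-u|^{2}\,R^{-2}(t)=:C(v,u)\,R^{-2}(t).
\end{equation*}

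To conclude I would invoke $R(t)\to+\infty$ from (\ref{eq:2.12}): since $R$ is increasing, $R^{-2}(t)$ decreases to $0$, so given the positive constant $B$ it is enough to choose $t_{0}$ to be any time with $R^{-2}(t_{0})\le B/C(v,u)$; then $\Omega(t)\le C(v,u)R^{-2}(t)\le C(v,u)R^{-2}(t_{0})\le B$ throughout $[t_{0},+\infty[$. I expect the only delicate point to be the uniform lower bound on $\sqrt{e^{2}-R^{-2}|v+u|^{2}}$: estimating it crudely while keeping $(\omega\cdot(v+u))^{2}$ in place would not manifestly keep it away from zero, and it is precisely the identification with $\sqrt{s}\ge 2$ that removes this difficulty. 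Once that is secured, the remainder is a routine passage to the limit governed by $R^{-2}(t)\to 0$.
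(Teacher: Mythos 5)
Your proof is correct, and it is in fact more streamlined than the paper's own argument. Both rest on the same mechanism: the bound (\ref{eq:3.4}) on $a$, the identification $e^{2}-R^{-2}|v+u|^{2}=s\geq 4$ (so the denominator stays away from zero), and the decay $R^{-2}(t)\to 0$ forced by (\ref{eq:2.12}). But the organization differs. The paper first splits $\Omega$ through the identity (\ref{eq:3.6}), $a\,\omega\cdot(v-u)=a\,v^{0}\,\omega\cdot(\hat v-\hat u)+a\,(v^{0}-u^{0})\,\omega\cdot\hat u$, and controls the two pieces separately: the first by combining (\ref{eq:3.4}) with (\ref{eq:3.5}) to get $R^{-2}e^{2}|v-u|^{2}/(2\sqrt{s})$, the second by (\ref{eq:3.4}) together with $(v^{0})^{2}-(u^{0})^{2}=R^{-2}\left(|v|^{2}-|u|^{2}\right)$; each piece is then shown to tend to zero as $t\to+\infty$, and $t_{0}$ is produced by the definition of the limit. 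You bypass the decomposition entirely: the Cauchy--Schwarz step $|\omega\cdot(v-u)|\leq|v-u|$ plus a single use of (\ref{eq:3.4}) already gives a majorant, and your uniform-in-$t$ controls $R(t)\geq 1$ (hence $e\leq e_{\max}$) and $\sqrt{s}\geq 2$ convert it into the explicit monotone envelope $C(v,u)\,R^{-2}(t)$. This buys a shorter proof that avoids (\ref{eq:3.5}) altogether, and the monotonicity of $R^{-2}$ lets you name $t_{0}$ concretely as any time with $R^{2}(t_{0})\geq C(v,u)/B$ rather than extracting it from a limit statement; like the paper's bounds, yours is uniform in $\omega$, which is what Remark 3.1 (the claim that $S^{2}_{+}=S^{2}$ for $t\geq t_{0}$) actually needs. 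The only point to add for completeness is the degenerate case $v=u$, where $C(v,u)=0$ and $\Omega\equiv 0$, so that the division defining $t_{0}$ is harmless.
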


\begin{proof} Direct computation leads to
\begin{equation}\label{eq:3.6}
a\omega.(v - u) = av^{0}\omega.(\hat{v} - \hat{u}) + a(v^{0} - u^{0})\omega.\hat{u}.
\end{equation}
From (\ref{eq:3.4})-(\ref{eq:3.5}), the first term in the right hand side of (\ref{eq:3.6}) is controlled as follows
\begin{equation*}
|av^{0}\omega.(\hat{v} - \hat{u})| \leq \frac{R^{-2}e^{2}|v - u|^{2}}{2\sqrt{e^{2} - R^{-2}|v + u|^{2}}} = \frac{R^{-2}e^{2}|v - u|^{2}}{2\sqrt{s}}.
\end{equation*}
Let's recall that $ e = \sqrt{1 + \frac{|v|^{2}}{R^{2}}} + \sqrt{1 +  \frac{|u|^{2}}{R^{2}}} $ and $\underset{t \rightarrow +\infty}{lim}R(t) = +\infty$. So, $ e^{2} $ goes to $ 4 $ as $ t $ goes to $ +\infty $. Since $ \sqrt{s} \geq 2 $, $ av^{0}\omega.(\hat{v} - \hat{u}) $ tends to zero as $ t $ goes to $ +\infty $.

About the second term in the right hand side of (\ref{eq:3.6}), let's observe that $ |a(v^{0} - u^{0})\omega.\hat{u}| \leq |a(v^{0} - u^{0})||u| $. The relation (\ref{eq:3.4}) together with the equality $ v^{0} - u^{0} = R^{-2}(|v|^{2} - |u|^{2}) $ and the fact that $ \underset{t \rightarrow + \infty}{lim} R(t) = + \infty $ allow us to claim that $ a(v^{0} - u^{0})\omega.\hat{u} $ goes to zero as $ t $ goes to $ +\infty $.

Thus, for a given positive constant B,  there exists $ t_{0} \in \mathbb{R}_{+} $ such that beyond $ t_{0} $, one has $\Omega(t) = a\omega.(v - u) \leq B $.

\end{proof}

 With the parametrization (\ref{eq:2.16}) we can prove the following inequality for $ \omega \in S^{2}_{+} $ where $ S^{2}_{+} =  \{\omega \in S^{2}, a\omega.(v - u) \leq B\} $ is a restriction of $ S^{2} $.

\begin{lemma}
Let v and u be given. suppose that v' and u' are parametrized as indicated in (\ref{eq:2.16}),(\ref{eq:2.17}) with an unit vector $ \omega \in S^{2}_{+}  $. Then, we have the following estimate.
\begin{equation}\label{eq:3.7}
|v|^{2} + |u|^{2} - |v'|^{2}- |u'|^{2} \leq B.
\end{equation}
\end{lemma}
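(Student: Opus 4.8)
The plan is to establish \eqref{eq:3.7} by a direct algebraic expansion, exploiting the two structural facts that $\omega$ is a unit vector and that the parametrization \eqref{eq:2.16} perturbs $v$ and $u$ only along $\omega$. First I would substitute $v' = v - a(v,u,\omega)\omega$ and $u' = u + a(v,u,\omega)\omega$ into $|v'|^2 + |u'|^2$ and expand the squares. Writing $a = a(v,u,\omega)$ for brevity and using $|\omega|^2 = 1$, the cross terms combine and the two quadratic contributions add, giving
\begin{align*}
|v'|^2 + |u'|^2 &= |v|^2 - 2a\,\omega\cdot v + a^2 + |u|^2 + 2a\,\omega\cdot u + a^2\\
&= |v|^2 + |u|^2 - 2a\,\omega\cdot(v-u) + 2a^2 .
\end{align*}

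This yields the key identity
\begin{equation*}
|v|^2 + |u|^2 - |v'|^2 - |u'|^2 = 2a\,\omega\cdot(v-u) - 2a^2 .
\end{equation*}
The decisive observation is that the quadratic term $-2a^2$ is non-positive, so it can only help the estimate; discarding it leaves
\begin{equation*}
|v|^2 + |u|^2 - |v'|^2 - |u'|^2 \le 2a\,\omega\cdot(v-u) .
\end{equation*}
At this point the restriction $\omega \in S^2_+$ enters: by the very definition of $S^2_+$ together with the preceding lemma on $\Omega(t) = a\,\omega\cdot(v-u)$, the quantity $a\,\omega\cdot(v-u)$ is controlled on $S^2_+$, which closes the argument.

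I expect the only point requiring care to be the bookkeeping of the constant rather than any genuine analytic difficulty: the expansion produces a factor $2$ in front of $a\,\omega\cdot(v-u)$, whereas $S^2_+$ bounds $a\,\omega\cdot(v-u)$ itself. Since the lemma on $\Omega(t)$ is available for \emph{every} positive constant, I would simply invoke it with $B/2$ in place of $B$ (equivalently, fix the threshold defining $S^2_+$ accordingly), so that $2a\,\omega\cdot(v-u) \le B$ and the claimed inequality \eqref{eq:3.7} follows. The heart of the proof is thus the elementary identity above and the favourable sign of the $a^2$ term; everything else is routine.
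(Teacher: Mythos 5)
Your proof is correct and follows essentially the same route as the paper: expand $|v'|^2+|u'|^2$ using \eqref{eq:2.16} and $|\omega|=1$ to get the identity $|v|^{2} + |u|^{2} - |v'|^{2}- |u'|^{2} = -2a^{2} + 2a\,\omega\cdot(v-u)$, discard the non-positive $-2a^2$ term, and invoke the defining property of $S^2_+$. You are in fact slightly more careful than the paper, which writes $2a\,\omega\cdot(v-u)\leq B$ directly even though the definition of $S^2_+$ only gives $a\,\omega\cdot(v-u)\leq B$; your remark that one should take the threshold $B/2$ (harmless, since $B$ is arbitrary) cleanly repairs this bookkeeping point.
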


\begin{proof} Straightforward computation leads to
\begin{equation*}
|v|^{2} + |u|^{2} - |v'|^{2}- |u'|^{2} = -2a^{2} + 2a\omega.(v - u) \leq 2a\omega.(v - u) \leq B.
\end{equation*}
\end{proof}

\begin{remark} The restriction of the type $ S^{2}_{+} $ on the set $ S^{2} $ was previously used by Strain, R. \cite{strain3} and Lee, H. \cite{lee3}. Note that  $S^{2}_{+} $ depends on $ v $, $ u $ and $ t $. From lemma 3.3,  we can find a finite $ t_{0} $ such that $ S^{2}_{+} = S^{2} $ for $ t \geq t_{0} $. This means that the restriction on $ S^{2} $ disappears for large $ t $. In the sequel, we consider the collision operator with the restriction $ S^{2}_{+} $.
\end{remark}

\begin{lemma}
Suppose that $ \sigma_{0}(\omega) $ satisfies the boundedness assumption (\ref{eq:2.10}), then we have the following inequalities:
\begin{equation}\label{eq:3.8}
\int_{\mathbb{R}^{3}}v_{\phi}g^{-b}e^{-|u|^{2}}du \leq C \quad for \quad 0 \leq b \leq 1,
\end{equation}
\begin{equation}\label{eq:3.9}
\int_{\mathbb{R}^{3}}v_{\phi}g^{-b}e^{-|u|^{2}}du \leq CR^{b - 1} \quad for \quad 1 \leq b < 4.
\end{equation}
\end{lemma}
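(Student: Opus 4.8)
The plan is to reduce the two stated bounds to the elementary inequalities of Lemma 3.1 after passing to the covariant variables $v,u$. First I would record that, since $v^{0}=p^{0}$ and $u^{0}=q^{0}$ under the change of variables (\ref{eq:2.15}), the M{\o}ller velocity reads $v_{\phi}=\frac{g\sqrt{s}}{v^{0}u^{0}}$, and that Lemma 3.1 translates into $g\le 2\sqrt{v^{0}u^{0}}$, into $\sqrt{s}\le 2\sqrt{v^{0}u^{0}}$ (from $s\le 4v^{0}u^{0}$), and into the lower bound $g\ge \frac{R^{-1}|v-u|}{\sqrt{v^{0}u^{0}}}$ (from the first inequality in (\ref{eq:3.1}) with $p-q=R^{-2}(v-u)$). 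The key rewriting is then
\[
v_{\phi}g^{-b}=\frac{\sqrt{s}}{v^{0}u^{0}}\,g^{1-b}\le \frac{2}{\sqrt{v^{0}u^{0}}}\,g^{1-b},
\]
which isolates a single power $g^{1-b}$; the extra factor $g$ carried by $v_{\phi}$ is exactly what will lower the order of the singularity and make it integrable.

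For the range $0\le b\le 1$ the exponent $1-b$ is nonnegative, so I would invoke the upper bound $g\le 2\sqrt{v^{0}u^{0}}$ to get $g^{1-b}\le 2^{1-b}(v^{0}u^{0})^{(1-b)/2}$; multiplying by $2(v^{0}u^{0})^{-1/2}$ yields
\[
v_{\phi}g^{-b}\le 2^{2-b}(v^{0}u^{0})^{-b/2}\le 4,
\]
since $v^{0},u^{0}\ge 1$ and $b\ge 0$. Integrating against the Gaussian weight then gives $\int_{\mathbb{R}^{3}}v_{\phi}g^{-b}e^{-|u|^{2}}du\le 4\pi^{3/2}$, which is (\ref{eq:3.8}). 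This step is uniform in $v$ and $t$ and needs no further decomposition.

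For $1\le b<4$ the exponent $1-b$ is nonpositive, so instead I would insert the lower bound $g\ge R^{-1}|v-u|/\sqrt{v^{0}u^{0}}$ into $g^{1-b}=g^{-(b-1)}$, obtaining
\[
v_{\phi}g^{-b}\le 2R^{b-1}\,(v^{0}u^{0})^{(b-2)/2}\,|v-u|^{-(b-1)}.
\]
This immediately produces the announced power $R^{b-1}$ and reduces everything to the weighted singular integral $\int_{\mathbb{R}^{3}}(v^{0}u^{0})^{(b-2)/2}|v-u|^{-(b-1)}e^{-|u|^{2}}du$. The singularity $|v-u|^{-(b-1)}$ is locally integrable in $\mathbb{R}^{3}$ precisely because $b-1<3$, i.e. $b<4$, and the factor $(u^{0})^{(b-2)/2}\le(1+|u|)^{(b-2)/2}$ is harmless, being absorbed by $e^{-|u|^{2}}$.

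The main obstacle is the residual prefactor $(v^{0})^{(b-2)/2}$, which for $2<b<4$ grows with $|v|$ and threatens the $v$-uniformity of the constant $C$. I expect to handle it by showing that the remaining $u$-integral decays in $v$ fast enough to compensate. Splitting the domain into $\{|u|\le |v|/2\}$, where $|v-u|\ge |v|/2$ forces the integrand to be $O(|v|^{-(b-1)})$, and $\{|u|> |v|/2\}$, where a fraction of the Gaussian is bounded by $e^{-c|v|^{2}}$ and beats every polynomial, one obtains a bound of the form $C(1+|v|)^{-(b-2)/2}$ for the integral; since $(v^{0})^{(b-2)/2}\le(1+|v|)^{(b-2)/2}$ (using $R\ge 1$, which follows from (\ref{eq:2.12})), the two factors cancel and the constant becomes uniform, giving (\ref{eq:3.9}). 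For $b\le 2$ the prefactor is already $\le 1$ and the plain boundedness of the Gaussian convolution of $|v-u|^{-(b-1)}$ suffices. This delicate balance between the singular kernel and the relativistic weights $v^{0},u^{0}$, rather than the extraction of the factor $R^{b-1}$, is where the real work lies.
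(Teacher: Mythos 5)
Your proposal is correct and follows essentially the same route as the paper: the same rewriting $v_{\phi}g^{-b}=\frac{\sqrt{s}}{v^{0}u^{0}}g^{1-b}\le \frac{2}{\sqrt{v^{0}u^{0}}}g^{1-b}$, the upper bound $g\le 2\sqrt{v^{0}u^{0}}$ for $0\le b\le 1$, and the lower bound $g\ge R^{-1}|v-u|/\sqrt{v^{0}u^{0}}$ to extract $R^{b-1}$ and reduce the case $1\le b<4$ to the Gaussian-weighted singular integral $\int_{\mathbb{R}^{3}}(v^{0}u^{0})^{(b-2)/2}|v-u|^{-(b-1)}e^{-|u|^{2}}du$. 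The only divergence is in the final step: the paper handles the growing weight $(v^{0})^{(b-2)/2}$ (for $2\le b<4$) by invoking the convolution estimate $\int_{\mathbb{R}^{3}}|v-u|^{-\alpha}e^{-|u|^{2}}du\le C_{\alpha}(1+|v|^{2})^{-\alpha/2}$ quoted from \cite{lee3} and checking the exponent balance $\frac{b-2}{4}-\frac{b-1}{2}=-\frac{b}{4}\le 0$, whereas you prove the required decay in $|v|$ by hand through the splitting $\{|u|\le |v|/2\}\cup\{|u|>|v|/2\}$, which is a correct, self-contained substitute for that cited inequality.
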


\begin{proof} This lemma was proved in \cite{lee3} and we just present it for the reader convenience.
Let's recall that R satisfies assumptions (\ref{eq:2.12}). \\
- About the  inequality (\ref{eq:3.8}), we have
\begin{align*}
\int_{\mathbb{R}^{3}}v_{\phi}g^{-b}e^{-|u|^{2}}du =  \int_{\mathbb{R}^{3}}\frac{g^{1 - b}\sqrt{s}}{v^{0}u^{0}}e^{-|u|^{2}}du \leq C \int_{\mathbb{R}^{3}}(v^{0}u^{0})^{-\frac{b}{2}}e^{-|u|^{2}}du \leq C
\end{align*}
- Let's now prove the second inequality. We start with the case $ 1 \leq b \leq 2 $
\begin{align*}
\int_{\mathbb{R}^{3}}\!v_{\phi}g^{-b}e^{-|u|^{2}}du \!&= \!\! \int_{\mathbb{R}^{3}}\!\frac{g^{1 - b}\sqrt{s}}{v^{0}u^{0}}e^{-|u|^{2}}\!du \!\leq \! C\! \int_{\mathbb{R}^{3}}\!\frac{1}{\sqrt{v^{0}u^{0}}}\frac{R^{b-1}(v^{0}u^{0})^{\frac{b-1}{2}}}{|v - u|^{b - 1}} e^{-|u|^{2}}\!du\\
 &\leq C R^{b - 1}\int_{\mathbb{R}^{3}}\frac{1}{(v^{0}u^{0})^{\frac{2 - b}{2}}}\frac{1}{|v - u|^{b - 1}} e^{-|u|^{2}}du\\
  &\leq C R^{b - 1}\int_{\mathbb{R}^{3}}\frac{1}{|v - u|^{b - 1}} e^{-|u|^{2}}du\\
  &\leq C R^{b - 1}(1 + |v|^{2})^{\frac{1-b}{2}} \leq C R^{b - 1}
\end{align*}
where the last inequality is obtained by using the inequality\\ $ \int_{\mathbb{R}^{3}}|v - u|^{-\alpha}e^{-|u|^{2}} \leq C_{\alpha}(1 + |v|^{2})^{\frac{-\alpha}{2}} $ given in \cite{lee3}\\

- Now, we consider the case $ 2 \leq b < 4 $. We have
\begin{align*}
\int_{\mathbb{R}^{3}}\!\!v_{\phi}g^{-b}e^{-|u|^{2}}du \!&= \!\! \int_{\mathbb{R}^{3}}\frac{g^{1 - b}\sqrt{s}}{v^{0}u^{0}}e^{-|u|^{2}}\!du\!\! \leq\!\! C \!\!\int_{\mathbb{R}^{3}}\!\frac{1}{\sqrt{v^{0}u^{0}}}\frac{R^{b-1}(v^{0}u^{0})^{\frac{b-1}{2}}}{|v - u|^{b - 1}} e^{-|u|^{2}}\!du\\
 &\leq C R^{b - 1}\int_{\mathbb{R}^{3}}\frac{(v^{0}u^{0})^{\frac{b - 2}{2}}}{|v - u|^{b - 1}} e^{-|u|^{2}}du\\
 &\leq C R^{b - 1}\int_{\mathbb{R}^{3}}\frac{(1 + |v|^{2})^{\frac{b - 2}{4}}(1 + |u|^{2})^{\frac{b - 2}{4}}}{|v - u|^{b - 1}} e^{-|u|^{2}}du
 \\
 &\leq C R^{b - 1}(1 + |v|^{2})^{\frac{b - 2}{4} - \frac{b - 1}{2}}
  \\
  &\leq C R^{b - 1}(1 + |v|^{2})^{-\frac{b}{4}} \leq C R^{b - 1}.
\end{align*}
\end{proof}
With this preliminaries results in hand, we look for the existence theorem which is the main result of this paper.
\section{Estimates on the collision operator}
First of all, we will try to control the loss term and the gain term.

In order to control the loss term, let's observe that
\begin{equation}\label{eq:4.1}
\hspace{-0.01cm}Q_{l}^{\#}(f, f)(t, x, v)\! =\! R^{-3}(t)f^{\#}(t, x, v)\!\int_{S^{2}_{+}}\!\!d\omega\!\! \int_{\mathbb{R}^{3}}\!\!du v_{\phi}\sigma(g, \omega)f(t, X^{t}(x, v), u).
\end{equation}
We look for an element $ y \in \mathbb{R}^{3}_{x} $ satisfying the relation $ f(t, X^{t}(x, v), u) = f^{\#}(t, y, u) $. This holds  if $ y = x + b(t, u, v) $ where the function b is defined as:
\begin{equation}\label{eq:4.2}
b(t, u, v) = \int_{0}^{t}\left(\frac{R^{-2}(s)v}{\sqrt{1 + R^{-2}(s)|v|^{2}}} - \frac{R^{-2}(s)u}{\sqrt{1 + R^{-2}(s)|u|^{2}}}\right)ds.
\end{equation}

\begin{lemma}
Under hypotheses (\ref{eq:2.10}) and (\ref{eq:2.11}) on the collisional cross section $ \sigma(g, \omega) $ and the assumption (\ref{eq:2.12}) on the scalar factor $ R(t) $, for any $ t \geq 0 $ and $ f^{\#} \in M $, there is a constant c independent on $ t, x, v $ for which
\begin{equation}\label{eq:4.3}
\int_{0}^{t}|Q_{l}^{\#}(f, f)(\tau, x, v)|d\tau \leq c \rho(x, v)^{-1}\|f^{\#}\|^{2}.
\end{equation}
\end{lemma}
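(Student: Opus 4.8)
The plan is to take advantage of the structure of the loss term: in $Q_l^{\#}$ the factor $f^{\#}(t,x,v)$ sits outside both the $\omega$- and $u$-integrals, so the full weight $\rho(x,v)^{-1}$ can be extracted directly from the definition of the norm on $M$, and the remaining double integral only needs to be shown to be integrable in time after multiplication by $R^{-3}(t)$. First I would estimate the two distribution functions in (\ref{eq:4.1}). Since $f^{\#}\in M$ one has $|f^{\#}(t,x,v)|\leq \rho(x,v)^{-1}\|f^{\#}\|$; writing $f(t,X^{t}(x,v),u)=f^{\#}(t,y,u)$ with $y=x+b(t,u,v)$ as in (\ref{eq:4.2}), one also has $|f^{\#}(t,y,u)|\leq \rho(y,u)^{-1}\|f^{\#}\|\leq e^{-|u|^{2}}\|f^{\#}\|$, where I simply discard the harmless factor $e^{-|y\times u|^{2}}\leq 1$. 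This already isolates the sought weight:
\begin{equation}
|Q_l^{\#}(f,f)(t,x,v)| \leq R^{-3}(t)\,\rho(x,v)^{-1}\|f^{\#}\|^{2}\int_{S^{2}_{+}}d\omega\int_{\mathbb{R}^{3}}v_{\phi}\,\sigma(g,\omega)\,e^{-|u|^{2}}\,du .
\end{equation}

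Next I would bound the remaining integral uniformly in $x$ and $v$. Using the upper bound $\sigma(g,\omega)\lesssim(1+g^{-b})\sigma_{0}(\omega)$ from (\ref{eq:2.10}) together with the boundedness of $\sigma_{0}$, the $\omega$-integral over $S^{2}_{+}\subseteq S^{2}$ contributes a fixed constant $\int_{S^{2}}\sigma_{0}(\omega)\,d\omega\leq C$, leaving $\int_{\mathbb{R}^{3}}v_{\phi}(1+g^{-b})e^{-|u|^{2}}du$. The term without $g^{-b}$ is the $b=0$ instance of (\ref{eq:3.8}) and is $\leq C$, while the term with $g^{-b}$ is controlled by Lemma 3.5: it is $\leq C$ when $0\leq b\leq 1$ by (\ref{eq:3.8}) and $\leq C R^{b-1}(t)$ when $1\leq b<4$ by (\ref{eq:3.9}). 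In all cases the inner double integral is dominated by $C\big(1+R^{b-1}(t)\big)$, independently of $x$ and $v$.

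Finally I would integrate in time. Collecting the estimates and using $R^{-3}R^{b-1}=R^{b-4}$,
\begin{equation}
\int_{0}^{t}|Q_l^{\#}(f,f)(\tau,x,v)|\,d\tau \leq C\,\rho(x,v)^{-1}\|f^{\#}\|^{2}\int_{0}^{t}\left(R^{-3}(\tau)+R^{b-4}(\tau)\right)d\tau ,
\end{equation}
and the time integral is bounded by $\int_{\mathbb{R}_{+}}(R^{-3}+R^{b-4})\,d\tau$, which is finite by (\ref{eq:2.12}); this yields the constant $c$, independent of $t$, $x$, $v$, and establishes (\ref{eq:4.3}). I do not expect a genuine obstacle here: the only point requiring care is the bookkeeping that matches the two regimes of Lemma 3.5 to the decay of $R$, namely that the pairing $R^{-3}\cdot 1\to R^{-3}$ and $R^{-3}\cdot R^{b-1}\to R^{b-4}$ makes the two summands appearing in hypothesis (\ref{eq:2.12}) the exact quantities needed for convergence. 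It is worth noting that neither the extra weight $e^{-|y\times u|^{2}}$ nor the refined assumption (\ref{eq:2.11}) is used for the loss term; they become indispensable only for the gain term, where $f^{\#}$ is evaluated at the post-collisional argument and therefore cannot be pulled out of the integral.
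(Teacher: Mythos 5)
Your proposal is correct and follows essentially the same route as the paper's own proof: pull $f^{\#}(t,x,v)$ out with the full weight $\rho(x,v)^{-1}$, bound the second factor by $e^{-|u|^{2}}\|f^{\#}\|$ after discarding the cross-product part of the weight, control the remaining $(\omega,u)$-integral by $C(1+R^{b-1})$ via the hypothesis (\ref{eq:2.10}) and Lemma 3.5, and conclude by the integrability of $R^{-3}+R^{b-4}$ from (\ref{eq:2.12}). Your closing observation that (\ref{eq:2.11}) is not actually needed for the loss term is also accurate.
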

\begin{proof}. We have
\begin{align}\label{eq:4.4}
&\int_{0}^{t}|Q_{l}^{\#}(f, f)(\tau, x, v)|d\tau \nonumber\\
&= \int_{0}^{t}\!\left|R^{-3}(\tau)d\tau f^{\#}(\tau, x, v)\!\int_{S^{2}_{+}}\!d\omega\int_{\mathbb{R}^{3}}\!\!du \frac{g\sqrt{s}}{v^{0}u^{0}}f^{\#}(\tau, x \!+\! \int_{0}^{\tau}\!R^{-2}(s)(\hat{v} -\hat{u})ds, u)\right|\nonumber\\
&\leq \rho^{-1}(x, v)\|f^{\#}\|^{2}\int_{0}^{t}R^{-3}(\tau)d\tau \int_{S^{2}_{+}}\int_{\mathbb{R}^{3}} \dfrac{ v_{\phi}\sigma(g, \omega)dud\omega }{e^{|u|^{2} + |(x + \int_{0}^{\tau}R^{-2}(s)\hat{v}ds)\times u|^{2}}}\nonumber\\
&\leq \rho^{-1}(x, v)\|f^{\#}\|^{2}\int_{0}^{t}R^{-3}(\tau)d\tau \int_{S^{2}_{+}}\int_{\mathbb{R}^{3}}v_{\phi}\sigma(g, \omega)e^{-|u|^{2}}dud\omega.
\end{align}
Consider the term $I_{t} = \int_{S^{2}_{+}}\int_{\mathbb{R}^{3}}d\omega du v_{\phi}\sigma(g, \omega)e^{-|u|^{2}}  $. Using the fact that\\ $ \sqrt{s} \leq 2\sqrt{v^{0}u^{0}} $ and  $ g \leq 2\sqrt{v^{0}u^{0}} $,  we have
\begin{align*}
I_{t} &\leq \int_{S^{2}_{+}}\int_{\mathbb{R}^{3}} \frac{g\sqrt{s}}{v^{0}u^{0}}(1 + g^{-b})\sigma_{0}(\omega)e^{-|u|^{2}}du d\omega  \\
&\lesssim \int_{S^{2}_{+}}\int_{\mathbb{R}^{3}}\sigma_{0}(\omega)e^{-|u|^{2}}du d\omega  + \int_{S^{2}_{+}}\int_{\mathbb{R}^{3}} \frac{g\sqrt{s}}{v^{0}u^{0}}g^{-b}\sigma_{0}(\omega)e^{-|u|^{2}}dud\omega  \\
&\lesssim \int_{\mathbb{R}^{3}}
e^{-|u|^{2}}du  + \int_{\mathbb{R}^{3}} \frac{g\sqrt{s}}{v^{0}u^{0}}g^{-b}e^{-|u|^{2}}du.
\end{align*}
It follows that
\begin{equation}\label{eq:4.5}
I_{t} \leq C \,\, \text{ if }\,\, 0 \leq b \leq 1  \text{ and } I_{t} \leq C(1 + R^{b - 1})\,\text{ if } \,\, 1 \leq b \leq 3.
\end{equation}
Under the assumptions (\ref{eq:2.12}) stating that $ R^{-3} $ and $ R^{b - 4} $ are integrable over $ [0, +\infty[ $, we obtain the desired result.
\end{proof}
\begin{lemma}
Under hypotheses (\ref{eq:2.10}) and (\ref{eq:2.11}) on the collisional cross section $ \sigma(g, \omega) $ and the assumption (\ref{eq:2.12}) on the scalar factor $ R(t) $, for any $ t \geq 0 $ and $ f^{\#} \in M $, there is a constant c independent of $ t, x, v $ for which
\begin{equation}\label{eq:4.6}
\int_{0}^{t}|Q_{g}^{\#}(f, f)(\tau, x, v)|d\tau \leq c \rho(x, v)^{-1}\|f^{\#}\|^{2}.
\end{equation}
\end{lemma}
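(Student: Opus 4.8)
The plan is to mirror the loss-term estimate of Lemma 4.1, the essential new difficulty being that both factors in the gain term are evaluated at the \emph{post}-collisional momenta $v'$ and $u'$, so no factor of $f^{\#}(\tau,x,v)$ is available to produce the weight $\rho(x,v)^{-1}$ for free. First I would write out $Q_{g}^{\#}$ along the characteristic: since the collision is local, both $f(v')$ and $f(u')$ are evaluated at the same point $\tilde{x}:=X^{\tau}(x,v)$. For each of them I solve for the initial point of its own characteristic and use (\ref{eq:2.25}) to write $f(\tau,\tilde{x},v')=f^{\#}(\tau,Y_{1},v')$ and $f(\tau,\tilde{x},u')=f^{\#}(\tau,Y_{2},u')$. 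The key geometric observation is that the characteristic (\ref{eq:2.24}) moves in the direction of the momentum, so $X^{\tau}(z,w)\times w=z\times w$; consequently $Y_{1}\times v'=\tilde{x}\times v'$ and $Y_{2}\times u'=\tilde{x}\times u'$, and the bound $|f^{\#}|\leq\|f^{\#}\|\rho^{-1}$ yields
\begin{equation*}
|f(v')f(u')|\leq\|f^{\#}\|^{2}\,e^{-(|v'|^{2}+|u'|^{2})}\,e^{-(|\tilde{x}\times v'|^{2}+|\tilde{x}\times u'|^{2})}.
\end{equation*}
Note in particular that $\tilde{x}\times v=x\times v$, so the conserved spatial quantity that must be reproduced is exactly $x\times v$.

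Next I would dispose of the energy part. On $S^{2}_{+}$, Lemma 3.4 gives $|v|^{2}+|u|^{2}-|v'|^{2}-|u'|^{2}\leq B$, whence $e^{-(|v'|^{2}+|u'|^{2})}\leq e^{B}e^{-|v|^{2}}e^{-|u|^{2}}$. This already extracts the factor $e^{-|v|^{2}}$ of $\rho(x,v)^{-1}$ and retains an $e^{-|u|^{2}}$ Gaussian for the $u$-integration, exactly as in the loss term.

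The heart of the proof, and the step I expect to be the main obstacle, is recovering the spatial weight $e^{-|x\times v|^{2}}$ from the post-collisional angular-momentum terms $e^{-(|\tilde{x}\times v'|^{2}+|\tilde{x}\times u'|^{2})}$. Here I would use momentum conservation $v'+u'=v+u$ together with $v'=v-a\omega$, and carry out the $\omega$-integration: hypothesis (\ref{eq:2.11}), $\int_{S^{2}}\sigma_{0}(\omega)e^{-|\omega.y|^{2}}d\omega\lesssim e^{-|y|^{2}}$, is tailored precisely to convert the resulting angular Gaussian into decay in a variable $y$ proportional to $x\times v$. Lemma 3.3 is invoked to guarantee that for large $\tau$ the restriction disappears ($S^{2}_{+}=S^{2}$) and $a$ is small, so that $v'$ stays near $v$ and the error terms coupling $\omega$ to the (generally unbounded) position $\tilde{x}$ remain controlled. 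The difficulty is genuine precisely because $\tilde{x}$ need not be bounded under (\ref{eq:2.12}): the angular-momentum terms cannot be handled by a crude pointwise inequality such as $|\tilde{x}\times v'|^{2}+|\tilde{x}\times u'|^{2}\geq\tfrac12|\tilde{x}\times(v+u)|^{2}$ (which is too weak, being small when $\tilde{x}\times v\approx-\tilde{x}\times u$), so the precise angular cancellation supplied by (\ref{eq:2.11}) is the indispensable ingredient.

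Finally, once $\rho(x,v)^{-1}$ has been pulled out, the remaining integral is of exactly the same form as in Lemma 4.1, namely $\int_{0}^{t}R^{-3}(\tau)\,d\tau\int_{S^{2}_{+}}\int_{\mathbb{R}^{3}}v_{\phi}\,\sigma(g,\omega)\,e^{-|u|^{2}}\,du\,d\omega$. Splitting $\sigma\lesssim(1+g^{-b})\sigma_{0}$ and using $g,\sqrt{s}\leq2\sqrt{v^{0}u^{0}}$ from Lemma 3.1, Lemma 3.5 bounds the inner integral by $C(1+R^{b-1})$, and the time integral converges because $R^{-3}+R^{b-4}\in L^{1}(\mathbb{R}_{+})$ by (\ref{eq:2.12}). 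This gives $\int_{0}^{t}|Q_{g}^{\#}(f,f)(\tau,x,v)|\,d\tau\leq c\,\rho(x,v)^{-1}\|f^{\#}\|^{2}$, as claimed.
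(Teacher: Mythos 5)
Your setup coincides exactly with the paper's: the passage to $f^{\#}$ at shifted base points, the identity $X^{\tau}(z,w)\times w=z\times w$ (so the spatial weights become $|\tilde{x}\times v'|^{2}$ and $|\tilde{x}\times u'|^{2}$ with $\tilde{x}=x+\chi(\tau)v$, $\chi(\tau)=\int_{0}^{\tau}R^{-2}(s)(1+R^{-2}(s)|v|^{2})^{-1/2}ds$), Lemma 3.4 on $S^{2}_{+}$ for the momentum part of the weight, and Lemma 3.5 together with (\ref{eq:2.12}) for the final $u$- and $\tau$-integrations. But at the step you yourself call the heart of the proof you have no argument: you assert that momentum conservation, the substitution $v'=v-a\omega$, and hypothesis (\ref{eq:2.11}) will convert the post-collisional weights into decay in $x\times v$, with ``error terms'' controlled by Lemma 3.3. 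That is not a proof, and the mechanism you sketch would not supply one. Lemma 3.3 produces a time $t_{0}$ depending on the fixed pair $(v,u)$ beyond which $a\,\omega.(v-u)\leq B$; it says nothing about the coupling of $\omega$ and $a$ with the unbounded position $\tilde{x}$ (the dangerous quantity is the product $|\tilde{x}|\,|a|$), and on the initial interval $[0,t_{0}]$, where $S^{2}_{+}$ genuinely restricts $S^{2}$, no smallness is available at all. So the crux of the lemma is missing from your proposal.

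What closes this gap in the paper is, in fact, another \emph{pointwise} inequality --- sharper than the parallelogram-type bound you rightly dismiss, so your conclusion that no pointwise bound can work is mistaken. Write $D=|\tilde{x}\times v'|^{2}+|\tilde{x}\times u'|^{2}$ as a quadratic polynomial in $\chi(\tau)$ with coefficients built from $a_{v}=x\times v'$, $b_{v}=v\times v'$, $a_{u}=x\times u'$, $b_{u}=v\times u'$, and bound $D$ below by its minimum over \emph{all} real values of $\chi$; a discriminant computation gives $D\geq|a_{v}\times\nu_{v}|^{2}+|a_{u}\times\nu_{u}|^{2}$ with $\nu_{v}=b_{v}/|b_{v}|$, $\nu_{u}=b_{u}/|b_{u}|$. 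This is precisely what neutralizes the unboundedness of $\tilde{x}$: the bound is uniform in $\chi(\tau)$, hence in $\tau$. The vector identities $a_{v}\times b_{v}=\bigl(v'.(v\times x)\bigr)v'$, $|b_{v}|=|a|\,|v\times\omega|$, $|v'|\geq|v\times\omega|$, and $v.(v\times x)=0$ then yield $|a_{v}\times\nu_{v}|\geq|\omega.(v\times x)|$, so that $D\geq|\omega.(v\times x)|^{2}$ pointwise in $(\tau,\omega,u)$. Only at this stage does (\ref{eq:2.11}) enter, exactly as you anticipated, converting $\int_{S^{2}_{+}}\sigma_{0}(\omega)e^{-|\omega.(v\times x)|^{2}}d\omega$ into $c\,e^{-|x\times v|^{2}}$ (the case $|x\times v|<1$ being trivial). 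Every peripheral step of your plan matches the paper, but without this quadratic-in-$\chi(\tau)$ lower bound the argument does not go through.
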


\begin{proof}
About the gain term, using the function $ b(t, u, v) $ defined in (\ref{eq:4.2}), it follows that
\begin{equation}\label{eq:4.7}
\begin{cases}
f(t, X^{t}(x, v), v') = f^{\#}(t, x + b(t, v', v), v') \\ f(t, X^{t}(x, v), u') = f^{\#}(t, x + b(t, u', v), u').
\end{cases}
\end{equation}
From (\ref{eq:4.7}), one has
\begin{align}\label{eq:4.8}
&\int_{0}^{t}|Q_{g}^{\#}(f, f)(\tau, x, v)|d\tau \nonumber\\
&= \int_{0}^{t}\left|R^{-3}(\tau)d\tau \int_{S^{2}_{+}}d\omega\int_{\mathbb{R}^{3}}du \frac{g\sqrt{s}}{v^{0}u^{0}}f(\tau,X^{\tau}(x, v), v')f(\tau,X^{\tau}(x, v), u'')\right|\nonumber\\
&=\! \!\int_{0}^{t}\!\!d\tau\!\bigg|\int_{S^{2}_{+}}\! \!\! \!d\omega\int_{\mathbb{R}^{3}}\! \!\! \!du \frac{R^{-3}(\tau)g\sqrt{s}}{v^{0}u^{0}}f^{\#}(\tau, x \!+\!\! \! \int_{0}^{\tau}\! \!\! \!R^{-2}(s)(\hat{v} \!-\!\hat{v'})ds, v')\nonumber\\
&\hspace{3cm}\times f^{\#}(\tau, x \! \!+\! \! \int_{0}^{\tau}\! \!R^{-2}(s)(\hat{v}\! -\!\hat{u'})ds, u')\bigg|\nonumber\\
\end{align}
\begin{align}
&\leq \!\!\|f^{\#}\|^{2}\!\!\int_{0}^{t}\!\!d\tau\!\! \int_{S^{2}_{+}}\!\!\int_{\mathbb{R}^{3}}\!\! \dfrac{R^{-3}(\tau)v_{\phi}\sigma(g, \omega)}{e^{ |v'|^{2} + |(x + \int_{0}^{\tau}R^{-2}(s)\hat{v}ds)\times v'|^{2}}}\nonumber\\
&\hspace{3cm}\times \dfrac{d\omega du }{e^{|u'|^{2} + |(x + \int_{0}^{\tau}R^{-2}(s)\hat{v}ds)\times u'|^{2}}}\nonumber\\
&\leq\!\! \|f^{\#}\|^{2}\!\!\int_{0}^{t}\!\!d\tau\!\! \int_{S^{2}_{+}}\!\!\int_{\mathbb{R}^{3}} \!\! \dfrac{R^{-3}(\tau) v_{\phi}\sigma(g, \omega)}{e^{ |v'|^{2} + |u'|^{2}}}\nonumber\\
&\hspace{3cm}\times\dfrac{d\omega du }{e^{|(x + \int_{0}^{\tau}R^{-2}(s)\hat{v}ds)\times v'|^{2} + |(x + \int_{0}^{\tau}R^{-2}(s)\hat{v}ds)\times u'|^{2}}}.
\end{align}

From (\ref{eq:3.7}), we have $ e^{|v'|^{2} + |u'|^{2}} \leq ce^{|v|^{2} + |u|^{2}} $. So (\ref{eq:4.8}) leads to
\begin{align}\label{eq:4.9}
&\int_{0}^{t}|Q_{g}^{\#}(f, f)(\tau, x, v)|d\tau \nonumber\\
&\leq c e^{-|v|^{2}}\|f^{\#}\|^{2}\!\!\int_{0}^{t}\!R^{-3}d\tau\!\! \int\!\! \dfrac{d\omega du v_{\phi}\sigma(g, \omega)e^{-|u|^{2}}}{e^{|(x + \int_{0}^{\tau}R^{-2}(s)\hat{v}ds)\times v'|^{2} + |(x + \int_{0}^{\tau}R^{-2}(s)\hat{v}ds)\times u'|^{2}}}.
\end{align}
We now try to control the term D defined by
\begin{equation}\label{eq:4.10}
D = |(x + \int_{0}^{\tau}R^{-2}(s)\hat{v}ds)\times v'|^{2} + |(x + \int_{0}^{\tau}R^{-2}(s)\hat{v}ds)\times u'|^{2}.
\end{equation}
Let's define the following vectors and scalars.
\begin{equation}\label{eq:4.11}
a_{v} = x\times v'; \quad b_{v} = v\times v'; \quad \nu_{v} = \frac{b_{v}}{|b_{v}|}; \quad c_{v} = a_{v}.b_{v},
\end{equation}
\begin{equation}\label{eq:4.12}
a_{u} = x\times u'; \quad b_{u} = v\times u'; \quad \nu_{u} = \frac{b_{u}}{|b_{u}|}; \quad c_{u} = a_{u}.b_{u}.
\end{equation}
If we set $ \chi(\tau) = \int_{0}^{\tau}\frac{R^{-2}(s)ds}{\sqrt{1 + R^{-2}(s)|v|^{2}}} $, we have
\begin{align*}
D &= (a_{u} + \chi(\tau)b_{u})^{2} + (a_{v} + \chi(\tau)b_{v}|)^{2}\\
& = (|b_{v}|^{2} + |b_{u}|^{2})\chi(\tau)^{2} + 2(c_{v} + c_{u})\chi(\tau) + (|a_{v}|^{2} + |a_{u}|^{2}).
\end{align*}
So, D is a polynomial of second order in $ \chi(\tau) $. Let's prove that the opposite of its discriminant $ \Delta $ is bounded from below. We have
\small{\begin{align*}
-\Delta &= (|b_{v}|^{2} + |b_{u}|^{2})(|a_{v}|^{2} + |a_{u}|^{2}) - (c_{v} + c_{u})^{2}\\
&= |b_{v}|^{2}|a_{v}|^{2} - (a_{v}.b_{v})^{2} + |b_{v}|^{2}|a_{u}|^{2}  + |b_{u}|^{2}|a_{u}|^{2}  - (a_{u}.b_{u})^{2} + |b_{u}|^{2}|a_{v}|^{2}  - 2c_{u}c_{v}\\
&= |a_{v}\times b_{v}|^{2} +  |a_{u}\times b_{u}|^{2} + |b_{v}|^{2}|a_{u}|^{2} + |b_{u}|^{2}|a_{v}|^{2}  - 2c_{u}c_{v}\\
&= |a_{v}\times b_{v}|^{2} +  |a_{u}\times b_{u}|^{2} +  |b_{v}|^{2}[|a_{u}\times \nu_{u}|^{2} + (a_{u}.\nu_{u})^{2}] + |b_{u}|^{2}[|a_{v}\times \nu_{v}|^{2}\\
&\hspace{8cm} + (a_{v}.\nu_{v})^{2}] - 2c_{u}c_{v}\\
&= |a_{v}\times b_{v}|^{2} +  |a_{u}\times b_{u}|^{2} +  |b_{v}|^{2}\frac{(c_{u})^{2}}{|b_{u}|^{2}} + |b_{v}|^{2}|a_{v}\times \nu_{v}|^{2} + |b_{u}|^{2}\frac{(c_{v})^{2}}{|b_{v}|^{2}}\\
&\hspace{6cm} + |b_{u}|^{2}|a_{u}\times \nu_{u}|^{2} - 2c_{u}|\frac{b_{v}}{b_{u}}| c_{v}|\frac{b_{u}}{b_{v}}|\\
&= |a_{v}\times b_{v}|^{2} +  |a_{u}\times b_{u}|^{2} + |b_{v}|^{2}|a_{u}\times \nu_{u}|^{2} + |b_{u}|^{2}|a_{v}\times \nu_{v}|^{2}     + \left(\frac{|b_{v}|c_{u}}{|b_{u}|} - \frac{|b_{u}|c_{v}}{|b_{v}|}\right)^{2} \\
&\geq |a_{v}\times b_{v}|^{2} +  |a_{u}\times b_{u}|^{2}  + |b_{v}|^{2}|a_{u}\times \nu_{u}|^{2} + |b_{u}|^{2}|a_{v}\times \nu_{v}|^{2}.
\end{align*}}
From the above inequality, we have
\begin{align*}
D &= (|b_{u}|^{2} + |b_{v}|^{2})\bigg[\bigg(\chi(\tau) + \frac{c_{v} + c_{u}}{|b_{u}|^{2} + |b_{v}|^{2}}\bigg)^{2} + \frac{(|b_{v}|^{2} + |b_{u}|^{2})(|a_{v}|^{2} + |a_{u}|^{2}) - (c_{v} + c_{u})^{2}}{(|b_{u}|^{2} + |b_{v}|^{2})^{2}}\bigg]\\
&\geq  \frac{(|b_{v}|^{2} + |b_{u}|^{2})(|a_{v}|^{2} + |a_{u}|^{2}) - (c_{v} + c_{u})^{2}}{|b_{v}|^{2} + |b_{u}|^{2}}\\
&\geq \frac{ |a_{v}\times b_{v}|^{2} +  |a_{u}\times b_{u}|^{2} +|b_{v}|^{2}|a_{u}\times \nu_{u}|^{2} + |b_{u}|^{2}|a_{v}\times \nu_{v}|^{2}}{|b_{v}|^{2} + |b_{u}|^{2}}\\
&= |a_{v}\times \nu_{v}|^{2} + |a_{u}\times \nu_{u}|^{2}.
\end{align*}
 We try to bound from below the terms $  |a_{v}\times \nu_{v}|^{2}$ and $ |a_{u}\times \nu_{u}|^{2}$.
\begin{align}\label{eq:4.13}
| b_{u}| &= |v\times u'|\nonumber\\
&\geq |\omega.v\times u'|\nonumber\\
&= |\omega.v\times (u + a \omega)|\nonumber\\
& = |\omega.v\times (u + a \omega)| = |\omega . p_{u}|.
\end{align}
Where for a given $x$, $ p_{x} $ is defined by $ p_{x} =  v\times x$.

Let's recall the following vector identity for three vectors $u$, $v$ and $w$
\begin{equation}\label{eq:4.14}
u\times (v\times w) = (u.w)v - (u.v)w
\end{equation}
Using (\ref{eq:4.14}), we have
\begin{align}\label{eq:4.15}
a_{v}\times b_{v} &= (x\times v')\times (v\times v')= -(v.x\times v')v'= v'. (v\times x)v' = (v'.p_{x})v'.
\end{align}
The same arguments as above yields to
\begin{equation}\label{eq:4.16}
a_{u}\times b_{u} = (u'.p_{x})u'
\end{equation}
In another hand, we have
\begin{equation}\label{eq:4.17}
b_{v} = v\times v' = v\times (v - a\omega)  = a p_{\omega}.
\end{equation}
This implies
\begin{equation*}
|a_{v}\times \nu_{v}| = \frac{|a_{v}\times b_{v}|}{|b_{v}|} = \frac{|(v'.p_{x})v'|}{|b_{v}|} = \frac{|v'.p_{x}||v'|}{|a||p_{\omega}|}.
\end{equation*}
In another hand, we have
\begin{equation*}
|v'| \geq |\omega\times v'|  = |\omega\times(v - a\omega)| = |\omega\times v| = |p_{\omega}|.
\end{equation*}
\begin{equation}\label{eq:4.18}
|a_{v}\times \nu_{v}| \geq \frac{|v'.p_{x}|}{|a|} = \frac{|(v - a\omega). x\times v |}{|a|} = |\omega.p_{x}|.
\end{equation}
Concerning $ a_{u}\times \nu_{u} $, using the relation $ a_{u}\times b_{u} = (u'.p_{x})u' $ we have
\begin{equation}\label{eq:4.19}
|a_{u}\times \nu_{u}|  \geq \frac{|u'.p_{x}|}{|v|}.
\end{equation}
(\ref{eq:4.16}) and (\ref{eq:4.19}) lead to
\begin{align}\label{eq:4.20}
D \geq |a_{v}\times \nu_{v}|^{2} + |a_{u}\times \nu_{u}|^{2} &\geq |\omega.p_{x}|^{2}  +  \frac{|u'.p_{x}|^{2}}{|v|^{2}} \geq |\omega.p_{x}|^{2}.
\end{align}
Let's now return to the estimate of the gain term. $ \int_{0}^{t}|Q_{g}^{\#}(f, f)(\tau, x, v)|d\tau $ enjoys
\begin{align}\label{eq:4.21}
&\int_{0}^{t}|Q_{g}^{\#}(f, f)(\tau, x, v)|d\tau \nonumber\\
&\leq c e^{-|v|^{2}}\|f^{\#}\|^{2}\int_{0}^{t}R^{-3}d\tau \int_{S^{2}_{+}\times \mathbb{R}^{3}} d\omega du v_{\phi}\sigma(g, \omega)e^{-|u|^{2}}e^{-|\omega.p_{x}|^{2}} \nonumber\\
&\leq \!\!c e^{-|v|^{2}}\|f^{\#}\|^{2}\!\!\int_{0}^{t}\!\!R^{-3}d\tau\!\! \int_{\mathbb{R}^{3}}du \frac{g\sqrt{s}}{v^{0}u^{0}}(1 + g^{-b})e^{-|u|^{2}}\!\!\int_{S^{2}_{+}}\!\!\sigma_{0}(\omega)e^{-|\omega.p_{x}|^{2}} d\omega.
\end{align}

- If $ |p_{x}| = |x\times v| \geq 1 $, under assumption (\ref{eq:2.11}), the integral with respect to $ \omega $ is controlled as follows
\begin{equation}\label{eq:4.22}
\int_{S^{2}_{+}}\sigma_{0}(\omega)e^{-|\omega.p_{x}|^{2}}d\omega \leq e^{-|p_{x}|^{2}} = e^{-|v\times x|^{2}}.
\end{equation}

- If $ |p_{x}| = |x\times v| < 1 $, we have
\begin{equation}\label{eq:4.23}
\int_{S^{2}_{+}}\sigma_{0}(\omega)e^{-|\omega.p_{x}|^{2}}d\omega \leq c \int_{S^{2}_{+}} d\omega \leq ce^{-|v\times x|^{2}}.
\end{equation}
(\ref{eq:4.22}) and (\ref{eq:2.23}) yield to
\begin{equation}\label{eq:4.24}
\int_{0}^{t}|Q_{g}^{\#}(f, f)(\tau, x, v)|d\tau \leq c\rho(x, v)^{-1}\|f^{\#}\|\int_{\mathbb{R}^{3}} \frac{g\sqrt{s}}{v^{0}u^{0}}(1 + g^{-b})e^{- |u|^{2}}du.
\end{equation}
The last term is controlled following the same arguments as for the loss term.
\end{proof}

We can now give the proof of our main result.
\begin{proof} If $ \|f_{0}\| \leq r/2 $ and $ f \in M_{r} $, then
\begin{align}\label{eq:4.26}
|\Gamma f^{\#}| &\leq \rho(x, v)^{-1}\|f_{0}\| + c\rho(x, v)^{-1}\|f^{\#}\|^{2}\leq  \rho(x, v)^{-1}[\frac{r}{2} + cr^{2}].
\end{align}
Thus, if $ \frac{r}{2} + cr^{2} \leq r $, i.e $ r \leq \frac{1}{2c} $, $ \Gamma $ maps $ M_{r} $ into itself.

On the other hand, using the bilinearity of Q, we prove that $ \Gamma $ is a contraction. In fact
If $ \|f_{0}\| \leq r/2 $ and $ f \in M_{r} $, then
\begin{equation}\label{eq:4.27}
|\Gamma f^{\#} - \Gamma g^{\#}| \leq c\rho(x, v)^{-1}(\|f\| + \|g\|)\|f - g\| \leq 2cR \rho(x, v)^{-1}\|f - g\|.
\end{equation}
The desired result is obtained if $ r \leq \frac{1}{2c} $.
\end{proof}

\textbf{Conclusion}: We have studied the inhomogeneous relativistic Boltzmann equation in RW space-time, previous studies were carried out for the spatially homogeneous case (see \cite{lee2,lee3,takou1}). We prove the global existence of mild solutions in a suitable weighted space.

\end{document}